\documentclass[a4paper,UKenglish,cleveref, autoref, thm-restate]{lipics-v2021}
\usepackage{amsthm,amsmath}
\usepackage{mathtools}
\usepackage{xspace}
\usepackage{enumerate}
\usepackage{tcolorbox}
\usepackage{mathrsfs}

\DeclareMathOperator{\Aut}{Aut}
\DeclareMathOperator{\End}{End}
\DeclareMathOperator{\Pol}{Pol}
\DeclareMathOperator{\Inv}{Inv}
\DeclareMathOperator{\Orb}{Orb}

\newcommand{\cclone}[1]{\ensuremath{\langle #1 \rangle}}

\newcommand{\Nat}{\mathbb{N}}
\newcommand{\Int}{\mathbb{Z}}
\newcommand{\Rat}{\mathbb{Q}}

\def \cA {\mathcal{A}}
\def \cB {\mathcal{B}}
\def \cC {\mathcal{C}}

\newcommand{\ar}{\ensuremath{\mathrm{ar}}}

\nolinenumbers
\newcommand{\csp}[1]{\textsc{CSP}\ensuremath{(#1)}}

\newcommand{\problemDef}[3]
{%
    
    \begin{tcolorbox}[arc=0.1mm,boxsep=-0.6mm,left=1.9mm,right=1.9mm,bottom=1.4mm,top=1.4mm,adjusted title={\strut \sc#1},colback=white!5]

    \noindent\textbf{Instance:} #2
    
    \noindent\textbf{Question:} #3
    \end{tcolorbox}
}

\newcommand{\notes}[1]{}

\newcommand{\first}[1]{$1^{\mathrm{st}}$}
\newcommand{\second}[1]{$2^{\mathrm{nd}}$}

\newcommand{\squishlisttwo}{
\begin{list}{$\blacktriangleright$}
{ \setlength{\itemsep}{0.5pt}
\setlength{\parsep}{0pt}
\setlength{\topsep}{0pt}
\setlength{\partopsep}{0.5pt}
\setlength{\leftmargin}{1em}
\setlength{\labelwidth}{1em}
\setlength{\labelsep}{0.5em} } }

\newcommand{\squishend}{
\end{list} }

\newcommand{\cc}[1]{{\mbox{\textnormal{\textsf{#1}}}}\xspace}  %

\newcommand{\Poly}{\cc{P}}
\newcommand{\NP}{\cc{NP}}
\newcommand{\coNP}{\cc{coNP}}

\newcommand{\FPT}{\cc{FPT}}

\newcommand{\paraNP}{\cc{pNP}}

\newcommand{\alien}[3]{{\ensuremath{\textsc{CSP}_{\leq #1}(#2 \cup #3)}}}

\newcommand{\balien}[2]{{\ensuremath{\textsc{CSP}_{\leq #1}(#2)}}}
\newcommand{\Sol}{\textrm{Sol}}
\newcommand{\NEQQ}[1]{\ensuremath{\mathrm{NEQ}_{#1}}}

\newcommand{\numac}{\ensuremath{\#{\sf ac}}}

\usepackage{todonotes}
\usepackage{comment}

\definecolor{OliveGreen}{rgb}{0.55, 0.64, 0.40}

\title{CSPs with Few Alien Constraints}
\author{Peter Jonsson}{Department of Computer and Information Science, Link{\"o}ping University, Link{\"o}ping, Sweden}{peter.jonsson@liu.se}{}{}
\author{Victor Lagerkvist}{Department of Computer and Information Science, Link{\"o}ping University, Link{\"o}ping, Sweden}{victor.lagerkvist@liu.se}{}{}
\author{George Osipov}{Department of Computer and Information Science, Link{\"o}ping University, Link{\"o}ping, Sweden}{george.osipov@liu.se}{}{}

\authorrunning{P. Jonsson, V. Lagerkvist, and G. Osipov} %

\Copyright{Peter Jonsson, Victor Lagerkvist, and George Osipov} %

\ccsdesc[500]{Theory of computation~Complexity theory and logic}
\ccsdesc[300]{Mathematics of computing~Discrete mathematics}

\keywords{Constraint satisfaction, parameterized complexity, hybrid theories} %

\category{} %

\funding{The first author is partially supported by the Swedish research council under grant VR-2021-04371. The second author is partially supported by the Swedish research council under grant VR-2022-03214.
The first and third author was supported by
the Wallenberg AI, Autonomous Systems and Software Program (WASP) funded
by the Knut and Alice Wallenberg Foundation.}

\relatedversion{This is an extended version of {\em CSPs with Few Alien Constraints}, appearing in the proceedings of the {\em 30th Conference on Principles and Practice of Constraint Programming} (CP-2024).}

\begin{document}

\maketitle

\begin{abstract}
The \emph{constraint satisfaction problem} asks to decide if a set of constraints 
over a relational structure $\mathcal{A}$ is satisfiable (CSP$(\mathcal{A})$). We 
consider CSP$(\mathcal{A} \cup \mathcal{B})$ where $\mathcal{A}$ is a structure and $\mathcal{B}$ 
is an \emph{alien} structure, and analyse its (parameterized)
complexity when at most $k$ alien constraints are allowed. We establish connections and obtain transferable complexity results to several well-studied problems that previously escaped classification attempts.
Our novel approach, utilizing logical and algebraic methods, yields an \textsf{FPT} versus \textsf{pNP} dichotomy for arbitrary finite structures and sharper dichotomies for Boolean structures and first-order reducts of $(\mathbb{N},=)$ (equality CSPs), together with many partial results for general $\omega$-categorical structures.
\end{abstract}

\newpage

\section{Introduction}

The \emph{constraint satisfaction problem} over a structure $\cA$ ($\csp{\cA}$) is the problem of verifying whether a set of constraints over $\cA$ admits at least one solution. This problem framework is vast, and, just to name a few, include all Boolean satisfiability problems as well as $k$-coloring problems, and for infinite domains we may formulate both problems centrally related to model checking first-order formulas and qualitative reasoning.
A notable example where a complete complexity dichotomy is known (separating tractable from NP-hard problems) includes \emph{all} finite structures~\cite{Bulatov:focs2017,Zhuk:jacm2020} which settled the {\em Feder-Vardi conjecture}~\cite{DBLP:journals/siamcomp/FederV98} in the positive. For infinite domains we may e.g. mention
first-order definable relations over well-behaved base structures like $(\mathbb{N}, =)$ and $(\mathbb{Q}, <)$~\cite{Bodirsky:Book}. While impressive mathematical achievements, these dichotomy results are still somewhat unsatisfactory from a practical perspective since we are unlikely to encounter instances which are based on \emph{purely} tractable constraints. Could it be possible to extend the reach of these powerful theoretical results by relaxing the basic setting so that we may allow greater flexibility than purely tractable constraints while still obtaining something simpler than an arbitrary NP-hard CSP?

We consider this problem in a \emph{hybrid} setting via problems of the form $\csp{\cA \cup \cB}$ where $\cA$ is a ``stable'', tractable background structure and $\cB$ is an \emph{alien} structure. We focus on the case when $\csp{\cA \cup \cB}$ is NP-hard (thus, richer than a polynomial-time solvable problem) but where we have comparably few constraints from the alien structure $\mathcal{B}$. This problem is compatible with the influential framework of \emph{parameterized complexity} which has been used with great effect to study \emph{structurally} restricted problems (e.g., based on tree-width) but where comparably little is known when one simultaneously restricts the allowed constraints. 

We begin (in Section~\ref{sec:alien-applications}) by relating the CSP problem with alien constraints to other problems, namely, (1) \emph{model checking}, (2) the problem of checking whether a constraint in a CSP instance is \emph{redundant}, (3) the \emph{implication} problem and (4) the \emph{equivalence} problem. We prove that the latter three problems are equivalent under Turing reductions and provide a general method for obtaining complexity dichotomies for all of these problems via a complexity dichotomy for the CSP problem with alien constraints. Importantly, all of these problems are well-known in their own right, but have traditionally been studied with wildly disparate tools and techniques, but by viewing them under the unifying lens of alien constraints we not only get four dichotomies for the price of one but also open the powerful toolbox based on \emph{universal algebra}. For non-Boolean domains this is not only a simplifying aspect but an absolute necessity to obtain general results. We expand upon the algebraic approach in Section~\ref{sec:generalresults} and relate alien constraints to \emph{primitive positive definitions} (pp-definitions) and the important  notion of a \emph{core}. %
As a second general contribution we explore the case when each relation in $\cB$ can be defined via an \emph{existential positive formula} over $\cA$. This results in a general \emph{fixed-parameter tractable} (FPT) algorithm (with respect to the number of alien constraints) applicable to both finite, and, as we demonstrate later, many natural classes of structures over infinite domains.

In the second half of the paper we attack the complexity of alien constraints more systematically. We begin with structures over finite domains where we obtain a general tractability result by combining the aforementioned FPT algorithm together with the CSP dichotomy theorem \cite{Bulatov:focs2017,Zhuk:jacm2020}. In a similar vein we obtain a general hardness result based on a universal algebraic gadget. Put together this yields a  general result: if $\cA \cup \cB$ is a core (which we may assume without loss of generality) then either $\alien{}{\cA}{\cB}$ is FPT, or $\alien{p}{\cA}{\cB}$ is NP-hard for some $p \geq 0$, i.e., is \emph{para-NP-hard} ($\paraNP$-hard). Thus, from a parameterized complexity view we obtain a complete dichotomy (FPT versus $\paraNP$-hardness) for finite-domain structures. However, to also obtain dichotomies for implication, equivalence, and the redundancy problem, we need sharper bounds on the parameter $p$.
We concentrate on two special cases. We begin with Boolean structures in Section~\ref{sec:class-boolean} and  obtain a complete classification which e.g.\ states that  $\alien{}{\cA}{\cB}$ is FPT if $\cA$ is in one of the classical \emph{Schaefer} classes, and give a precise characterization of $\alien{p}{\cA}{\cB}$ for all relevant values of $p$ if $\cA$ is not Schaefer. For example, if we assume that $\cA$ is Horn, we may thus conclude that $\alien{}{\cA}{\cB}$ is \FPT for \emph{any} alien Boolean structure $\cB$. More generally this dichotomy is sufficiently sharp to also yield dichotomies for implication, equivalence, and redundancy. Compared to the proofs by Schnoor \& Schnoor~\cite{schnoor2008a} for implication and B\"ohler~\cite{Boehler:etal:csl2002} for equivalence, we do not use an exhaustive case analysis over Post's lattice.

In Section~\ref{sec:infindom-languages} we consider structures over infinite domains. 
If we assume that $\cA$ and $\cB$ are $\omega$-categorical, then we manage to lift the FPT algorithm based on existential positive definability from Section~\ref{sec:generalresults} to the infinite setting. Another important distinction is that the notion of a core, and subsequently the common trick of singleton expansion, works differently for $\omega$-categorical languages. Here we follow Bodirsky~\cite{Bodirsky:Book} and use the notion of a \emph{model-complete core}, which means that all $n$-ary orbits are pp-definable, where an orbit is defined as the action of the automorphism group over a fixed $n$-ary tuple. This allows us to, for example, prove that $\alien{}{\cA}{\cB}$ is FPT whenever $\cA$ is an $\omega$-categorical model-complete core and $\csp{\cA}$ is in P such that the orbits of the automorphism group of $\cB$ are included in the orbits of the automorphism group of $\cA$. This forms a cornerstone for the dichotomy for equality languages since the only remaining cases are when $\cA$ is $0$-valid (meaning that each relation contains a constant tuple) but not Horn (defined similarly to the Boolean domain), and when $\cB$ is not 0-valid. The remaining cases are far from trivial, however, and we require the algebraic machinery from Bodirsky et al.~\cite{bodirsky_chen_pinsker_2010} which provides a characterization of equality languages in terms of their \emph{retraction} to finite domains. We rely on this description via a recent classification result by Osipov \& Wahlstr\"om~\cite{equalityMinCSParXiv}. Importantly, our dichotomy result is sufficiently sharp to additionally obtain complexity dichotomies for the implication, equivalence, and redundancy problems. To the best of our knowledge, these dichotomies are the first of their kind for arbitrary equality languages.

We finish the paper with a comprehensive discussion in Section~\ref{sec:discussion}. Most importantly, we have opened up the possibility to systematically study not only alien constraints, but also related problems that have previously escaped complexity classifications. For future research the main open questions are whether (1) sharper results can be obtained for arbitrary finite domains and (2) which further classes of infinite domain structures should be considered. 

Proofs of statements marked with $(\star)$ can be found in the appendix in the end of the paper.

\section{Preliminaries} \label{sec:prelims}

We begin by introducing the basic terminology and the fundamental problems under consideration. We assume throughout the paper that the complexity classes
\Poly and \NP are distinct. We let $\Rat$ denote the rationals, $\Nat=\{0,1,2,\dots\}$ the natural numbers, 
$\Int=\{\dots, -2.-1,0,1,2, \dots\}$ the integers, and $\Int_+=\{1,2,3,\dots\}$ the positive integers.
For every $c \in \Int_+$, we let $[c]=\{1,2,\dots,c\}$.

A \emph{parameterized problem} is a subset of $\Sigma^* \times \Nat$ where $\Sigma$ is the input alphabet, i.e., an instance is given by $x \in \Sigma^*$ of size $n$ and a natural number $k$, and the running time of an algorithm is studied with respect to both $k$ and $n$. The most favourable complexity class is \FPT\
(\emph{fixed-parameter tractable}),
which contains all problems that can be decided 
in $f(k)\cdot n^{O(1)}$ time with $f$ being some computable
function. An \emph{fpt-reduction} from a parameterized problem $L_1 \subseteq \Sigma_1^* \times \mathbb{N}$ to $L_2 \subseteq \Sigma_2^* \times \mathbb{N}$ is a function $P: \Sigma_1^* \times \mathbb{N} \rightarrow \Sigma_2^* \times \mathbb{N}$ that preserves membership (i.e., $(x, k) \in L_1 \Leftrightarrow P((x, k)) \in L_2$), is computable in $f(k) \cdot |x|^{O(1)}$ time for some computable function $f$, and there exists a computable function $g$ such that for all  $(x,k) \in L_1$, if  $(x', k') = P((x, k))$, then $k' \leq g(k)$.
It is easy to verify that if $L_1$ and $L_2$ are parameterized
problems such that $L_1$ fpt-reduces to $L_2$ and $L_2$ is in \FPT, then
it follows that $L_1$ is in \FPT, too.
There are many parameterized classes with less desirable running times than \FPT but
we focus on \paraNP-hard problems: a problem is \paraNP-hard under fpt-reductions if it is \NP-hard for some constant parameter value, implying such problems are not in \FPT unless \Poly = \NP.

We continue by defining \emph{constraint satisfaction problems}. First, a \emph{constraint language} is a (typically finite) set of relations $\cA$ over a universe $A$, and for a relation $R \in \Gamma$ we write $\ar(R) = k$ to denote its arity $k$. It is sometimes convenient to associate a constraint language with a relational signature, and thus obtaining a \emph{relational structure}:
a tuple $(A; \tau, I)$ where
$A$ is the \emph{domain}, or \emph{universe},
$\tau$ is a relational signature, and $I$ is a function from $\sigma$ to the
set of all relations over $D$ which assigns each relation symbol $R$ a
corresponding relation $R^{\mathcal{A}}$ over $D$. We write $\ar(R)$ for the arity of a relation $R$, and if $R = \emptyset$ then $\ar(R) = 0$.
All structures in this paper are relational
and we assume that they have a finite signature unless otherwise stated.
Typically, we do not need to make a sharp distinction between relations and the corresponding relation symbols, so we usually simply write $(A; R_1, \ldots, R_m)$, where 
each $R_i$ is a relation over $A$, to denote a structure. 
We also sometimes do not make a sharp distinction between structures and sets of relations when the signature is not important. For arbitrary structures $\cA$ and $\cA'$ with domains $A$ and $A'$, we let
$\cA \cup \cA'$ denote the structure with domain $A \cup A'$ and containing the relations in $\cA$ and $\cA'$.

For a constraint language (or structure) $\cA$ an instance of the \emph{constraint satisfaction problem} over $\cA$ ($\csp{\cA}$) is then given by 
$I = (V,C)$ where $V$ is a set of variables and $C$ a set of constraints of the form $R(x_1, \ldots, x_k)$ where $x_1, \ldots, x_k \in V$ and $R \in \cA$, and the question is whether there exist a function $f \colon V \rightarrow A$ that satisfies all constraints (a \emph{solution}), i.e., $(f(x_1), \ldots, f(x_k)) \in R$ for all $R(x_1, \ldots, x_k) \in C$. 
The \emph{CSP dichotomy theorem} says that all finite-domain CSPs are either in P or are NP-complete~\cite{Bulatov:focs2017,Zhuk:jacm2020}.
Given an instance $I=(V,C)$ of $\csp{\cA}$, we let $\Sol(I)$ be the set of solutions to $I$. 
We now define CSPs with alien constraints in the style of Cohen et al.~\cite{Cohen:etal:jacm2000}.

\problemDef{$\alien{}{\cA}{\cB}$}
{A natural number $k$ and an instance $I = (V,C_1 \cup C_2)$ of CSP$(\cA \cup \cB$), where  $(V, C_1)$ is an instance of CSP$(\cA)$ and $(V,C_2)$ is an instance of CSP$(\cB)$ with $|C_2| \leq k$.}
{Does there exist a satisfying assignment to $I$?}

Throughout the paper, we assume without loss of generality that the structures $\cA$ and $\cB$ can be associated with disjoint signatures.
The parameter in $\alien{}{\cA}{\cB}$ is the \emph{number of alien constraints} (abbreviated \numac{}). 
We let
$\alien{k}{\cA}{\cB}$ denote the $\alien{}{\cA}{\cB}$ problem restricted
to a fixed value $k$ of parameter \numac{}.
Note that if CSP$(\cA)$ is not in \Poly, then  $\alien{0}{\cA}{\cB}$ is not in \Poly; moreover,
if CSP$(\cA \cup \cB)$ is in \Poly, then $\alien{}{\cA}{\cB}$ is in \Poly.
Thus, it is sensible to always require that CSP$(\cA)$ is in \Poly and CSP$(\cA \cup \cB)$
is not in \Poly. In many natural cases (e.g., all finite-domain CSPs),
CSP$(\cA \cup \cB)$ not being polynomial-time
solvable implies that CSP$(\cA \cup \cB)$ is \NP-hard.

A $k$-ary relation $R$ is said
to have a \emph{primitive positive definition} (pp-definition) over
a constraint language $\cA$ if $R(x_1, \ldots, x_{k}) \equiv \exists y_1, \ldots,
y_{k'}\, \colon \, R_1(\mathbf{x_1}) \wedge \ldots \wedge
R_m({\mathbf{x_m}})$ where each $R_i \in \cA \cup \{=_A\}$ and
each $\mathbf{x_i}$ is a tuple of variables over
$x_1,\ldots, x_{k}$, $y_1, \ldots, y_{k'}$ matching the arity of
$R_i$. Here, and in the sequel, $=_A$ is the equality relation over $A$, i.e. $\{(a,a) \mid a \in A\}$.  
If $\cA$ is a constraint
language, then we let $\cclone{\cA}$ be the inclusion-wise
smallest set of relations containing $\cA$ closed under
pp-definitions. 

\begin{theorem}[\cite{Jeavons:tcs98}] \label{thm:pp-red}
Let $\cA$ and $\cB$ be structures with the same domain. 
If every relation of $\cA$ has a primitive positive definition in $\cB$, then there is a polynomial-time reduction from CSP$(\cA)$ to CSP$(\cB)$.
\end{theorem}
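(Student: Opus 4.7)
The plan is to exhibit an explicit many-one reduction that replaces each $\cA$-constraint by its pp-definition over $\cB$. Since $\cA$ is finite (as all signatures are assumed finite) and each pp-definition is itself a fixed finite conjunction, the reduction will have only constant blow-up per constraint, hence run in polynomial (in fact, linear) time.

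First I would fix, once and for all, a pp-definition of each $R \in \cA$ of the form
\[
R(x_1,\dots,x_{\ar(R)}) \ \equiv\ \exists y_1,\dots,y_{k_R}\,\colon\, \bigwedge_{i=1}^{m_R} R_i(\mathbf{x}_i),
\]
where each $R_i \in \cB \cup \{=_A\}$. These formulas are regarded as constants of the reduction. Given an instance $I=(V,C)$ of $\csp{\cA}$, I would construct an instance $I' = (V',C')$ of $\csp{\cB}$ by processing the constraints one by one: for each $R(z_1,\dots,z_{\ar(R)}) \in C$ I introduce a fresh copy $y_1^c,\dots,y_{k_R}^c$ of the existential variables of the chosen pp-definition of $R$ (disjoint from $V$ and from previously introduced variables), and then add to $C'$ the constraints obtained from the conjuncts $R_i(\mathbf{x}_i)$ by substituting the $z_j$'s for the $x_j$'s and the $y_j^c$'s for the $y_j$'s. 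Equality atoms $u =_A v$ produced in this process are then eliminated by identifying $u$ and $v$ (e.g.\ via union-find, processing all equalities first and quotienting $V'$ by the resulting equivalence relation).

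For correctness, I would argue both directions directly from the meaning of pp-definability. If $f\colon V \to A$ satisfies $I$, then for each $c = R(z_1,\dots,z_{\ar(R)}) \in C$ the formula $\exists y_1,\dots,y_{k_R}\,\bigwedge_i R_i(\mathbf{x}_i)$ holds under $x_j \mapsto f(z_j)$, so there exist witnesses for $y_1^c,\dots,y_{k_R}^c$; assembling all such witnesses yields an assignment $f'\colon V' \to A$ satisfying $C'$ and respecting the equality-induced identifications. Conversely, any $f'\colon V' \to A$ satisfying $C'$ restricts to $V$ and, by construction, certifies the quantifier-free matrix of each pp-definition, showing that $f'|_V$ satisfies $I$. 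Polynomial-time bounds are then immediate: each original constraint contributes $O(1)$ new variables and constraints.

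The only real subtlety is the presence of $=_A$ in pp-definitions, which is not an available $\cB$-relation but is part of the standard definition. I would handle this by variable identification as above, which is sound because an assignment satisfies an equality atom iff it assigns the two variables the same value. With that bookkeeping the reduction is straightforward, and nothing stronger than the syntactic form of the pp-definitions is needed.
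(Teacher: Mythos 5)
Your reduction is correct and is exactly the classical constraint-by-constraint replacement argument that the paper itself relies on (it cites this result to Jeavons and, in the proof of Theorem~\ref{thm:generalized-ppdefs}, explicitly refers to and reuses this same construction). The handling of the existential witnesses and of the $=_A$ atoms via variable identification is the standard and correct way to complete it, so nothing is missing.
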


When working with problems of the form $\alien{k}{\cA}{\cB}$ we additionally introduce the following simplifying notation:
$\cclone{\mathcal{A} \cup \mathcal{B}}_{\leq k}$ 
denotes the set of all pp-definable relations over $\mathcal{A} \cup \mathcal{B}$ using at most $k$ atoms from $\mathcal{B}$. 
We now describe the corresponding algebraic objects.
An operation 
$f \colon D^{m}\to D$ is a \emph{polymorphism} of a relation 
$R\subseteq D^{k}$ if, for any choice of $m$ tuples 
$(t_{{11}},\dotsc ,t_{{1k}}),\dotsc ,(t_{{m1}},\dotsc ,t_{{mk}})$ from $R$, it holds that 
$(f(t_{{11}},\dotsc ,t_{{m1}}),\dotsc ,f(t_{{1k}},\dotsc ,t_{{mk}}))$ 
is in $R$. 
An \emph{endomorphism} is a polymorphism with arity one.
If $f$ is a polymorphism of $R$, then we sometimes say that $R$ is \emph{invariant} under $f$.
A constraint language $\cA$ has the polymorphism $f$ if every relation in $\cA$ has $f$ as a polymorphism.
We let $\Pol(\cA)$ and $\End(\cA)$ denote the sets of polymorphisms and endomorphisms of $\cA$,
respectively. If $F$ is a set of functions
over $D$, then $\Inv(F)$ denotes the set of relations over $D$ that are invariant under every
function in $F$.
There are close algebraic connections between the operators $\cclone{\cdot}$, $\Pol(\cdot)$, and
$\Inv(\cdot)$. For instance, if $\cA$ has a finite domain (or, more generally, if $\cA$ is
$\omega$-categorical; see below), then we have a Galois connection $\cclone{\cA}=\Inv(\Pol(\cA))$~\cite[Theorem~5.1]{Bodirsky:Nesetril:csl2003}.

Polymorphisms enable us to compactly describe the tractable cases
of Boolean CSPs.

\begin{theorem}[\cite{Schaefer:stoc78}]
\label{thm:schaefer-result}
Let $\cA$ be a constraint language over the Boolean domain. The problem CSP($\cA$) is decidable in polynomial time if $\cA$ is invariant under one of the following six operations:
(1)
the constant unary operation 0 ($\cA$ is \textrm{0-valid}), 
(2)
the constant unary operation 1 ($\cA$ is \textrm{1-valid}), 
(3)
the binary min operation $\sqcap$ ($\cA$ is \textrm{Horn}), 
(4)
the binary max operation $\sqcup$ ($\cA$ is \textrm{anti-Horn}), 
(5)
the ternary majority operation 
$M(x,y,z)=(x\sqcap y)\sqcup (x\sqcap z)\sqcup (y\sqcap z)$ ($\cA$ is \textrm{2-SAT}), or
(6)
the ternary minority operation 
$m(x,y,z)=x \oplus y \oplus z$ where $\oplus$ is the addition operator in GF$(2)$ ($\cA$ is \textrm{affine}). 
Otherwise, the problem CSP$(\cA)$ is NP-complete.
\end{theorem}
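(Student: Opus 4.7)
The plan is to treat the two directions separately. For the polynomial-time upper bounds, each of the six polymorphism conditions corresponds to a classical tractable fragment of SAT with a well-known algorithm. If $\cA$ is 0-valid (resp. 1-valid), the constant all-zero (resp. all-one) assignment satisfies every constraint and can be produced in linear time. If $\cA$ is Horn (resp. anti-Horn), each relation is preserved by $\sqcap$ (resp. $\sqcup$) and therefore unfolds into a conjunction of Horn (resp. dual Horn) clauses, so satisfiability is decided by unit propagation in polynomial time. If $\cA$ is 2-SAT, the majority polymorphism forces each relation to be a conjunction of clauses of width at most two, and instances are solved by the Aspvall--Plass--Tarjan implication-graph algorithm; equivalently, the majority polymorphism implies that establishing $2$-consistency is sufficient. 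Finally, if $\cA$ is affine, every relation is an affine subspace of $\{0,1\}^k$, so an instance reduces to a system of linear equations over $\mathrm{GF}(2)$ and is solved by Gaussian elimination.

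For the NP-hardness direction I would invoke the Galois connection $\cclone{\cA}=\Inv(\Pol(\cA))$ recalled in Section~\ref{sec:prelims}. If $\cA$ is invariant under none of the six listed operations, then $\Pol(\cA)$ is contained in the intersection of the six maximal Boolean clones singled out by $\{0\}$, $\{1\}$, $\sqcap$, $\sqcup$, $M$, and $m$. A traversal of Post's lattice of Boolean clones shows that this intersection collapses $\Pol(\cA)$ all the way down to the clone of projections (or at most projections extended by the unary $\lnot$). By the Galois connection, $\cclone{\cA}$ therefore captures every Boolean relation preserved only by projections, which includes a canonical NP-hard template such as one-in-three SAT or the three non-trivial ternary clauses of $3$-SAT. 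An application of Theorem~\ref{thm:pp-red} then yields a polynomial-time reduction from an NP-hard CSP to $\csp{\cA}$, giving NP-completeness (upper bound by NP-membership of any Boolean CSP).

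The main obstacle is the hardness direction, where one must extract positive structural information (pp-definability of a hard relation) from purely negative hypotheses (six specific polymorphisms are missing). The cleanest route is through Post's lattice, which classifies all Boolean clones and lets one read off precisely how ruling out the six critical polymorphisms degenerates $\Pol(\cA)$; Schaefer's original combinatorial case analysis is an equivalent hands-on alternative that essentially rediscovers the relevant fragment of Post's lattice. In either form, once the pp-definability of an NP-hard relation is secured, \cref{thm:pp-red} closes the argument uniformly.
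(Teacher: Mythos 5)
This statement is Schaefer's dichotomy theorem, which the paper cites from \cite{Schaefer:stoc78} and does not prove; there is no in-paper argument to compare against. Your sketch follows the standard modern algebraic route (tractable fragments for the six polymorphisms, Post's lattice plus the Galois connection $\cclone{\cA}=\Inv(\Pol(\cA))$ and Theorem~\ref{thm:pp-red} for hardness), and the tractability half is fine as stated.

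Two points in the hardness half need repair. First, the phrase ``$\Pol(\cA)$ is contained in the intersection of the six maximal Boolean clones'' is not the right lattice-theoretic statement: the relevant fact is about the \emph{minimal} clones (the atoms of Post's lattice), namely that every clone strictly above the projections contains one of the seven minimal clones generated by the constants $0$ and $1$, negation, $\sqcap$, $\sqcup$, $M$, and $m$. Excluding the six listed operations therefore forces $\Pol(\cA)$ to be either the clone of projections or the essentially unary clone generated by $\lnot$ --- which is the conclusion you state, but the justification as written is garbled. Second, in the case $\Pol(\cA)=\cclone{\{\lnot\}}$, the set $\Inv(\Pol(\cA))$ consists only of complement-invariant relations, so neither one-in-three SAT nor the $3$-SAT clause relations are pp-definable there; you must instead exhibit a complement-invariant NP-hard template, the canonical choice being not-all-equal $3$-SAT. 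With that substitution (and NP-membership, which is immediate for Boolean CSPs), the argument closes correctly via Theorem~\ref{thm:pp-red}.
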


A Boolean constraint language that satisfies condition (3), (4), (5), or (6) is called \emph{Schaefer}.

A finite-domain structure $\cA$ is a \emph{core} if every $e \in \End(\cA)$ is a bijection. We
let  $f(R) = \{(f(t_1), \ldots, f(t_n)) \mid (t_1, \ldots, t_n) \in R\}$ when
$f \colon A \rightarrow A$ and $R \in \cA$.
If $e \in \End(\cA)$ has minimal range, then $e(\cA)=\{e(R) \; | \; R \in \cA\}$ is a core and this core is unique up to isomorphism. We can thus speak about \emph{the core} $\cA^c$ of $\cA$.
It is easy to see that CSP$(\cA)$ and CSP$(\cA^c)$ are equivalent under polynomial-time reductions (indeed, even log-space reductions suffice). Another useful equivalence concerns constant relations.
Let $\cA^+$ denote the structure $\cA$ expanded by all unary singleton
relations $\{(a)\}$, $a \in A$. If $\cA$ is a core, then CSP$(\cA)$ and CSP$(\cA^+)$ are equivalent under polynomial-time reductions~\cite{DBLP:conf/dagstuhl/BartoKW17}.

We will frequently consider $\omega$-\emph{categorical} structures.
An \emph{automorphism} of a structure $\cA$ is a permutation $\alpha$ of its domain $A$ such that both $\alpha$ and its inverse are homomorphisms. The set of all automorphisms of a structure $\cA$ is denoted by $\Aut(\cA)$, and forms a group with respect to composition. 
The \emph{orbit} of $(a_1,\dots,a_n) \in A^n$ in $\Aut(\cA)$ is the set 
$\{(\alpha(a_1),\dots,\alpha(a_n)) \mid \alpha \in \Aut(\cA) \}.$ 
Let $\Orb(\cA)$ denote the set of orbits of $n$-tuples in $\Aut(\cA)$ (for all $n \geq 1$).
A structure $\cA$ with countable domain is $\omega$-categorical if and only if $\Aut(\cA)$ is \emph{oligomorphic}, i.e., it has only finitely many orbits of $n$-tuples for all $n \geq 1$. 

Two important classes of
$\omega$-categorical structures are \emph{equality languages} (respectively, \emph{temporal languages}) where each relation can be defined as the set of models of a first-order formula over $({\mathbb N};=)$ (respectively, $(\mathbb{Q}; <)$).  Importantly, $\Aut(\cA)$ is the full symmetric group if $\cA$ is an equality language. A relation in an equality language is said to be \emph{0-valid} if it contains \emph{any} constant tuple. This is justified since if the relation is invariant under one
constant operation, then it is invariant under all constant operations.%
The computational complexity of CSP for equality languages was classified by Bodirsky and 
Kára~\cite[Theorem~1]{Bodirsky:Kara:toct2008}: for any equality language $\cA$, CSP$(\cA)$ is solvable in
  polynomial time if $\cA$ is 0-valid or invariant under a binary injective operation, and is \NP-complete otherwise.

\section{Applications of Alien Constraints} \label{sec:alien-applications}

We will now demonstrate how alien constraints
can be used for studying the complexity of CSP-related problem:
Section~\ref{sec:redundant-etc} contains an example where
we analyse the complexity
of \emph{redundancy}, \emph{equivalence}, and \emph{implication} problems, and 
we consider connections between the model checking problem and
CSPs with alien constraints in Section~\ref{sec:modelchecking}.
To relate problem complexity we use \emph{Turing reductions}: a problem $L_1$ is \emph{polynomial-time Turing reducible} to $L_2$ (denoted $L_1 \leq^p_T L_2$) if it can be solved in polynomial time using an oracle for $L_2$. 
Two problems $L_1$ and $L_2$ are \emph{polynomial-time Turing equivalent} if $L_1 \leq^p_T L_2$ and $L_2 \leq^p_T L_1$.

\subsection{The Redundancy Problem and its Relatives}
\label{sec:redundant-etc}

We will now study the complexity of a family of well-known computational problems.
We begin by some definitions. Let $\cA$ denote a constraint language and assume that $I=(V,C)$ is an instance of CSP$(\cA)$. 
We say that a constraint $c \in C$ is \emph{redundant} in $I$
if $\Sol((V,C)) = \Sol((V,C \setminus \{c\}))$. 
We have the following computational problems.

\problemDef{\textsc{Redundant}$(\cA)$}
{An instance $(V,C)$ of CSP$(\cA)$ and a constraint $c \in C$.}
{Is $c$ redundant in $(V,C)$?}

\problemDef{\textsc{Impl}$(\cA)$}
{Two instances $(V,C_1), (V, C_2)$ of CSP$(\cA)$.}
{Does $(V, C_1)$ imply $(V, C_2)$, i.e., is it the case that $\mathrm{Sol}((V,C_1)) \subseteq \mathrm{Sol}((V, C_2))$?}

\problemDef{\textsc{Equiv}$(\cA)$}
{Two instances $(V,C_1), (V, C_2)$ of CSP$(\cA)$.}
{Is it the case that $\mathrm{Sol}((V,C_1)) = \mathrm{Sol}((V, C_2))$?}

Before we start working with alien constraints, we exhibit a
close connection between
$\textsc{Redundant}(\cdot)$,
\textsc{Equiv}$(\cdot)$, and $\textsc{Impl}(\cdot)$.

\begin{lemma} \label{lemma:equiveasy}
Let $\cA$ be a structure.
The problems \textsc{Equiv}$(\cA)$, $\textsc{Impl}(\cA)$, and
$\textsc{Redundant}(\cA)$ are polynomial-time Turing equivalent.
\end{lemma}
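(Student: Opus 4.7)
The plan is to prove the lemma by exhibiting a cycle of polynomial-time Turing reductions
\[
\textsc{Redundant}(\cA) \; \leq^p_T \; \textsc{Equiv}(\cA) \; \leq^p_T \; \textsc{Impl}(\cA) \; \leq^p_T \; \textsc{Redundant}(\cA),
\]
each using only a polynomial (indeed linear) number of oracle calls on instances of size polynomial in the original input. The first reduction is essentially definitional: a constraint $c$ is redundant in $(V,C)$ precisely when $\Sol((V,C)) = \Sol((V, C \setminus \{c\}))$, so a single call to an $\textsc{Equiv}(\cA)$ oracle on the pair $((V,C),(V, C \setminus \{c\}))$ answers the query. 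For the second reduction I would use that $\Sol((V,C_1)) = \Sol((V,C_2))$ holds iff $\Sol((V,C_1)) \subseteq \Sol((V,C_2))$ and the reverse inclusion also holds, so any $\textsc{Equiv}(\cA)$ instance is decided by two calls to an $\textsc{Impl}(\cA)$ oracle.

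The main step is the third reduction. My plan is to decompose implication constraint-by-constraint: $\Sol((V,C_1)) \subseteq \Sol((V,C_2))$ holds iff every $c \in C_2$ is individually implied by $C_1$. For each $c \in C_2 \setminus C_1$, adding $c$ to $C_1$ can only shrink the solution set, so $\Sol((V, C_1 \cup \{c\})) = \Sol((V, C_1))$ is equivalent to $C_1$ implying $c$. Hence $c$ being redundant in $(V, C_1 \cup \{c\})$ characterizes exactly what we need, and one $\textsc{Redundant}(\cA)$ query per $c$ suffices. Constraints in $C_1 \cap C_2$ are trivially implied and require no query.

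The one subtle point I expect to be the main (minor) obstacle is handling the edge case $c \in C_1 \cap C_2$ in this last step: there $C_1 \cup \{c\} = C_1$, and asking whether $c$ is redundant in $(V, C_1)$ answers the wrong question (it tests whether $C_1 \setminus \{c\}$ implies $c$, not whether $C_1$ does). Since $C$ is a set, such constraints are detected and discarded in polynomial time by a trivial scan, after which the reduction argument goes through verbatim. Composing the three reductions then gives the claimed polynomial-time Turing equivalence of $\textsc{Redundant}(\cA)$, $\textsc{Impl}(\cA)$, and $\textsc{Equiv}(\cA)$.
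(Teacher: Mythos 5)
Your proposal is correct and follows essentially the same route as the paper: the same cycle of three reductions (redundancy via equivalence of $(V,C)$ and $(V,C\setminus\{c\})$, equivalence via two implication queries, and implication via per-constraint redundancy tests on $C_1 \cup \{c\}$), including the same treatment of the edge case $c \in C_1$.
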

\begin{proof}
We show that (1) $\textsc{Equiv}(\cA) \leq_T^p \textsc{Impl}(\cA)$,
(2) $\textsc{Impl}(\cA) \leq_T^p \textsc{Redundant}(\cA)$, and
(3) $\textsc{Redundant}(\cA) \leq_T^p\textsc{Equiv}(\cA)$.

(1). Let $((V,C_1),(V,C_2))$ be an instance of \textsc{Equiv}$(\cA)$. We need to check whether $\Sol((V,C_1)) = \Sol((V, C_2))$. This is true if and only if the two $\textsc{Impl}$ instances $((V,C_1),(V,C_2))$
and $((V,C_2),(V,C_1))$ are yes-instances.

(2). Let $((V, C_1),(V, C_2))$ be an instance of \textsc{Impl}$(\cA)$. For each constraint $c \in C_2$, first check whether $C_1$ implies $\{c\}$ by (a) checking if $c \in C_1$, in which case $C_1$ trivially implies $\{c\}$, (b) if not, then check whether $c$ is redundant in $C_1 \cup \{c\}$, in which case we answer yes, and otherwise no. If $C_1$ implies  $\{c\}$ for every $c \in C_2$ then $C_1$ implies  $C_2$ and we answer yes, and otherwise no.

(3). Let $I=((V,C),c)$ be an instance of
\textsc{Redundant}$(\cA)$. It is obvious that $I$ is a yes-instance if and only
if the instance $((V,C),(V,C \setminus \{c\}))$ is a yes-instance of 
$\textsc{Equiv}(\cA)$.
\end{proof}  

Next, we show how the complexity of \textsc{Redundant}$(\cA)$ can be analysed by exploiting CSPs with alien constraints. 
If $R$ is a $k$-ary relation over domain $D$, then we let $\bar{R}$ denote its
\emph{complement}, i.e. $\bar{R}=D^k \setminus R$.

\begin{theorem} \label{thm:redundant-hybrid}
$(\star)$
Let $\cA$ be a structure with domain $A$.
If CSP$(\cA)$ is not in \Poly, then
\textsc{Redundant}$(\cA)$ is not in \Poly.
In particular,
\textsc{Redundant}$(\cA)$ is \NP-hard (under polynomial-time Turing reductions) whenever CSP$(\cA)$ is \NP-hard.
Otherwise,
\textsc{Redundant}$(\cA)$ is in \Poly if and only if
for every relation $R \in \cA$,
$\alien{1}{\cA}{\{\bar{R}\}}$ is in \Poly.
\end{theorem}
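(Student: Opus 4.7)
The plan rests on a single duality: a constraint $c = R(x_1,\ldots,x_k)$ is non-redundant in $(V,C)$ exactly when some assignment satisfies $C \setminus \{c\}$ but violates $c$, equivalently, one that satisfies $\bar R(x_1,\ldots,x_k)$. Hence non-redundancy of $c$ in $(V,C)$ is equivalent to satisfiability of $(V,(C\setminus\{c\}) \cup \{\bar R(x_1,\ldots,x_k)\})$, which is an instance of $\alien{1}{\cA}{\{\bar R\}}$. This correspondence drives both halves of the theorem.

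For the first part, I would give a polynomial-time many-one reduction from the complement of $\csp{\cA}$ to $\textsc{Redundant}(\cA)$. Whenever $\csp{\cA}$ is not in $\Poly$, there must exist some $R \in \cA$ of arity $k$ with $\emptyset \subsetneq R \subsetneq A^k$, since otherwise every instance would be solvable by inspection of whether an empty relation is used. Fix such an $R$ and any $(a_1,\ldots,a_k) \notin R$. Given an instance $(V,C)$ of $\csp{\cA}$, construct $I' = (V \cup \{z_1,\ldots,z_k\}, C \cup \{R(z_1,\ldots,z_k)\})$ with fresh variables $z_i$. The distinguished constraint $R(z_1,\ldots,z_k)$ is then redundant in $I'$ iff $(V,C)$ is unsatisfiable: vacuously so if there is no solution, while any solution of $(V,C)$ extends via $z_i \mapsto a_i$ to a solution of $I'$ minus the distinguished atom that violates it. \NP-hardness under polynomial-time Turing reductions when $\csp{\cA}$ is \NP-hard then follows automatically, because such reductions absorb complementation.

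For the second part, assuming $\csp{\cA}$ is in $\Poly$, I would establish the biconditional by two direct polynomial-time reductions built from the duality. Given an $\alien{1}{\cA}{\{\bar R\}}$-instance $(V, C_1 \cup \{\bar R(\vec x)\})$, query the $\textsc{Redundant}(\cA)$ oracle as to whether a freshly added copy of $R(\vec x)$ is redundant in $(V, C_1 \cup \{R(\vec x)\})$; a ``no'' answer corresponds exactly to satisfiability of the alien instance. Conversely, given an instance $((V,C),c)$ of $\textsc{Redundant}(\cA)$ with $c = R(\vec x)$, query satisfiability of $(V,(C\setminus\{c\}) \cup \{\bar R(\vec x)\})$ as an instance of $\alien{1}{\cA}{\{\bar R\}}$. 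Because $\cA$ has finite signature, this covers only finitely many relations $R$, so having each $\alien{1}{\cA}{\{\bar R\}}$ in $\Poly$ suffices.

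The only mild subtlety I foresee is at the formalism level: if the added copy of $R(\vec x)$ already occurs in $C_1$, one must either treat constraint collections as multisets or label the distinguished constraint so that its removal is unambiguous. This is routine and does not affect the reductions, so the heart of the argument is genuinely the one-line duality between non-redundancy and satisfiability with the complementary alien atom.
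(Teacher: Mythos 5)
Your proposal is correct and follows essentially the same route as the paper's proof: the same fresh-variable gadget with a relation $R$ satisfying $\emptyset \subsetneq R \subsetneq A^k$ for the first part (the paper extends a solution by a tuple inside $R$ and notes $R \subsetneq A^k$ gives the violating witness, whereas you pick the violating tuple directly — the same argument), and the same complement-swap reductions in both directions for the biconditional.
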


Combining Theorem~\ref{thm:redundant-hybrid} with the forthcoming
complexity classification of Boolean CSPs with alien constraints (Theorem~\ref{thm:boolean-classification})
shows that Boolean \textsc{Redundant}$(\cA)$ is in \Poly if and only if $\cA$ is Schaefer.
We have not found this result in the literature
but we view it as folklore since it follows
from other classification results (start from \cite{Boehler:etal:csl2002} or \cite{schnoor2008a} and transfer the results
to \textsc{Redundant}$(\cA)$ with the aid of Lemma~\ref{lemma:equiveasy}).
However, we claim that our proof is very different when compared to
the proofs in \cite{Boehler:etal:csl2002} and \cite{schnoor2008a}):
Böhler et al.\ use a lengthy case analysis while Schnoor \& Schnoor in addition uses the so-called weak base method, which scales poorly since not much is known about this construction for non-Boolean domains.
We do not claim that our proof is superior, but we do not see how to generalize
the classifications by Böhler et al.\ and Schnoor \& Schnoor to larger (in particular infinite) domains since they are fundamentally based on Post's classification of Boolean clones. Such a generalization, on the other hand, is indeed possible with our approach.
We demonstrate in Section~\ref{sec:equality-classification} that we can obtain a full understanding of the complexity
of CSPs with alien constraints for equality languages. This result carries over
to \textsc{Redundant}$(\cdot)$ via Theorem~\ref{thm:redundant-hybrid}, implying that we have a full complexity classification
of \textsc{Redundant}$(\cdot)$ for equality languages.
This result can immediately be transferred to \textsc{Impl}$(\cdot)$ and $\textsc{Equiv}(\cdot)$
by Lemma~\ref{lemma:equiveasy}.

\subsection{Model Checking}
\label{sec:modelchecking}

We follow \cite{Madelaine:Martin:sicomp2018} and view the \emph{model checking} problem as follows: given a logic $\mathscr{L}$, a structure $\cA$,
and a sentence $\phi$ of $\mathscr{L}$, decide whether $\cA \models \phi$.
The main motivation for this problem is its connection to databases~\cite{Vardi:stoc82}.
From the CSP perspective, we consider a slightly reformulated version: given an instance
$I=(V,C)$ of CSP$(\cA)$ and
a formula $\phi$ with free variables in $V$, we ask if there is a tuple in $\Sol(I)$ that satisfies $\phi$.
If $\phi$ can be expressed as an instance $I'$ of CSP$(\cB)$ for some
structure $\cB$, then
this is the same thing as if asking whether $I \cup I'$ has a solution 
or not. In the model-checking setting, we want to check whether $\phi$
is true in all solutions of $I$. If $\neg \phi$  can be expressed as an instance $I'$ of CSP$(\cB)$ for some
structure $\cB$, then we are done: every solution to $I$ satisfies
$\phi$ if and only if CSP$(I \cup I')$ is not satisfiable, and this clarifies
the connection with CSPs with alien constraints. For instance, one may view \textsc{Impl}$(\cA)$ (and consequently the underlying \alien{1}{\cA}{\bar{R}} problems by
Lemma~\ref{lemma:equiveasy} and Theorem~\ref{thm:redundant-hybrid})
as the model checking problem restricted to queries that are 
$\cA$-sentences constructed using the operators $\forall$ and $\vee$.
Naturally, one wants the ability to use more complex queries such as (1)
queries extended with other relations, i.e. queries constructed over an expanded
structure, or (2)
queries that are built using other logical connectives.

In both cases, it makes sense to study the fixed-parameter tractability of 
$\alien{}{\cA}{\cB}$ with parameter \numac{}
since the query is typically much smaller than the structure $\cA$.
The connection is quite obvious in the first case (one may view \numac{} as measuring how ``complex'' the given query is) while it is more hidden in the second case.
Let us therefore consider the negation operator.
From a logical perspective, one
may view a constraint $\bar{R}(x_1,\dots,x_k)$ as the formula $\neg R(x_1,\dots,x_k)$.
Needless to say, the relation $\bar{R}$ is often not pp-definable in a structure $\cA$
containing $R$ but it may be existential positive definable in $\cA$. 
Assume that the preconditions of the example hold and that $\csp{\cA}$ is
in \Poly. We know that $\bar{R}$ has an existential positive definition in $\cA$
for every $R \in \cA$.
Let $\bar{\cA} = \{\bar{R} \mid R \in \cA\}$ and consider the problem
$\alien{}{\cA}{\bar{\cA}}$. The forthcoming Theorem~\ref{thm:fpt-result-variant} is applicable so this problem is in \FPT parameterized by $\#{\sf ac}$.
Now, the corresponding model checking problem is to decide if $\cA \models \phi$ where
$\phi$ is an $\cA$-sentence constructed using the operators $\forall$ and $\vee$ 
and where we are additionally allowed to use negated relations $\neg R(x_1,\dots,x_m)$. It follows that this problem is in \FPT parameterized by the number of negated relations.

\section{General Tools for Alien Constraints}
\label{sec:generalresults}

We analyze the complexity of $\alien{k}{\cA}{\cB}$, starting in Section~\ref{sec:algebra} by investigating which of the classic algebraic tools are applicable to the alien constraint setting, and continuing in Section~\ref{sec:general-fpt} by presenting a general FPT result.
We will use these observations for proving various results but also for obtaining a better understanding of alien constraints.

\subsection{Alien Constraints and Algebra} \label{sec:algebra}

First, we have a straightforward generalization of Theorem~\ref{thm:pp-red} in the alien constraint setting.

\begin{theorem}
\label{thm:generalized-ppdefs}
$(\star)$ Let $\cA$ and $\cB$ be two structures with disjoint signatures. There exists
a polynomial time many-one reduction $f$ from
       $\alien{}{\cA^*}{\cB^*}$ to $\alien{}{\cA}{\cB}$ for any finite
       $\cA^* \subseteq \cclone{\cA}$ and
       $\cB^* \subseteq \cclone{\cA \cup \cB}$.
If $I=(V,C,k)$ is an instance of $\alien{}{\cA^*}{\cB^*}$ and
$f(I)=(V',C',k')$, then $k'$ only depends on $k$, $\cA$, $\cB$, and $\cB^*$, so $f$ is an fpt-reduction.
\end{theorem}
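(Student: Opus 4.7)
The plan is to generalize the standard pp-reduction from Theorem~\ref{thm:pp-red} while carefully tracking how many atoms from $\cB$ are introduced, since this quantity controls the parameter of the reduced instance.

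First, I would fix, for each $R \in \cA^*$, a pp-definition $R(x_1,\dots,x_{\ar(R)}) \equiv \exists \mathbf{y}.\ \phi_R$ over $\cA \cup \{=_A\}$, and, for each $S \in \cB^*$, a pp-definition $S(x_1,\dots,x_{\ar(S)}) \equiv \exists \mathbf{z}.\ \psi_S$ over $\cA \cup \cB \cup \{=_A\}$. The key quantity is $N := \max_{S \in \cB^*} N_S$, where $N_S$ denotes the number of $\cB$-atoms appearing in $\psi_S$. Since $\cB^*$ is finite, $N$ is a constant depending only on $\cA$, $\cB$, $\cB^*$ and the chosen pp-definitions.

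Given an instance $I = (V, C_1 \cup C_2, k)$ of $\alien{}{\cA^*}{\cB^*}$ with $|C_2| \leq k$, I would construct $f(I) = (V', C'_1 \cup C'_2, k')$ as follows. For each $R(\mathbf{x}) \in C_1$ with $R \in \cA^*$, introduce a fresh copy of the existential variables in $\phi_R$ and add the corresponding atoms to $C'_1$. For each $S(\mathbf{x}) \in C_2$ with $S \in \cB^*$, do the same with $\psi_S$, placing the $\cA$-atoms into $C'_1$ and the $\cB$-atoms into $C'_2$. Equality atoms $x =_A y$ are eliminated by identifying $x$ with $y$. Set $k' := Nk$. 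The construction runs in polynomial time, and by direct counting $|C'_2| \leq N|C_2| \leq Nk = k'$.

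Correctness is routine: distributing existential quantifiers over conjunction, any solution to $I$ extends to a solution of $f(I)$ by instantiating the fresh existential variables with witnesses for each replaced constraint, and conversely the restriction of any solution of $f(I)$ to $V$ satisfies $I$. Since $k' = Nk$ depends only on $k$, $\cA$, $\cB$, and $\cB^*$, this many-one reduction is automatically an fpt-reduction. The main care point, rather than a real obstacle, is to cleanly separate inside each $\psi_S$ the atoms coming from $\cA$ (which contribute to $C'_1$ and do not count toward the alien budget) from those coming from $\cB$ (which do); the disjoint-signatures hypothesis is exactly what makes this bookkeeping unambiguous, and it is precisely what allows $N$ to be a constant rather than growing with the instance.
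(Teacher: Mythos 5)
Your proposal is correct and follows essentially the same route as the paper's proof: replace each $\cA^*$-constraint and each $\cB^*$-constraint by a fixed precomputed pp-definition, eliminate equality atoms by identification, and observe that each of the at most $k$ alien constraints contributes at most a constant number of $\cB$-atoms, giving $k' \leq Nk$. The only difference is that you spell out details (the constant $N$, the handling of $=_A$) that the paper delegates to the classical reduction of Theorem~\ref{thm:pp-red}.
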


This claim is, naturally, in general not true for  $\alien{k}{\cA^*}{\cB}$ for finite $\cA^* \subseteq \cclone{\cA \cup \cB}$. 
The idea underlying Theorem~\ref{thm:generalized-ppdefs}
can be used in many different ways and we give one example.

\begin{proposition} \label{prop:k=1-red}
  If $\cA,\cB$ are structures and
  $R \in \langle \cA \cup \cB \rangle_{\leq 1}$, then
  $\alien{k}{\cA}{(\cB \cup \{R\})}$ is polynomial-time reducible
  to $\alien{k}{\cA}{\cB}$.
\end{proposition}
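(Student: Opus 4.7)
The plan is to apply the pp-definability idea underlying Theorem~\ref{thm:generalized-ppdefs}, but to track the number of $\cB$-atoms produced by the substitution carefully so that the parameter is preserved exactly (rather than only bounded by some function of $k$).

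Concretely, since $R \in \langle \cA \cup \cB \rangle_{\leq 1}$, there is a pp-definition
\[
R(x_1,\dots,x_n) \;\equiv\; \exists y_1,\dots,y_m \;\bigl(\psi_{\cA}(\bar x,\bar y) \,\wedge\, \psi_{\cB}(\bar x,\bar y)\bigr),
\]
where $\psi_{\cA}$ is a conjunction of atoms from $\cA \cup \{=_A\}$ and $\psi_{\cB}$ consists of \emph{at most one} atom from $\cB$. I would fix such a definition once and for all as a constant-size gadget depending only on $R$ (and $\cA,\cB$). Given an instance $(V, C_1 \cup C_2, k)$ of $\alien{k}{\cA}{(\cB \cup \{R\})}$, where $C_1$ is the $\cA$-part and $C_2$ is the alien part with $|C_2| \leq k$, I would process every $R$-constraint $R(z_1,\dots,z_n) \in C_2$ as follows: introduce fresh copies $y'_1,\dots,y'_m$ of the existential variables, add the atoms of $\psi_{\cA}(\bar z, \bar y')$ to $C_1$, and add the (at most one) atom of $\psi_{\cB}(\bar z, \bar y')$ to $C_2$. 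Equality atoms from $\psi_{\cA}$ are handled either by keeping $=_A \in \langle \cA \rangle$ implicitly or by variable contraction, in the standard way.

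The counting step is the whole point: every original $R$-constraint contributes at most one $\cB$-atom, and every original $\cB$-constraint contributes exactly one $\cB$-atom, so the new alien constraint count is bounded by $|C_2| \leq k$. Thus the resulting instance $(V', C_1' \cup C_2', k)$ is a valid instance of $\alien{k}{\cA}{\cB}$. The construction is polynomial-time because the gadget $(\psi_{\cA}, \psi_{\cB})$ has constant size.

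For correctness I would argue both directions in the usual way. If $f$ solves the original instance, extend $f$ by picking witnesses for each fresh $\bar y'$ using the fact that $R(f(z_1),\dots,f(z_n))$ holds and $\psi_{\cA}\wedge\psi_{\cB}$ defines $R$; this yields a solution of the new instance. Conversely, any solution of the new instance restricted to $V$ satisfies the original $C_1$-constraints and original $\cB$-constraints trivially, and satisfies each original $R$-constraint because the pp-definition of $R$ is witnessed by the values assigned to the fresh existential variables. The only mild obstacle is a bookkeeping one — ensuring that fresh variables are introduced per occurrence of $R$ (so the gadget copies do not interfere) and that equality atoms are eliminated without disturbing the count of $\cB$-atoms — but no new idea beyond Theorem~\ref{thm:generalized-ppdefs} is needed.
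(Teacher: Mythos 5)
Your proof is correct and follows exactly the route the paper intends: Proposition~\ref{prop:k=1-red} is stated as an instance of ``the idea underlying Theorem~\ref{thm:generalized-ppdefs}'', namely substituting the fixed pp-definition of $R$ (with at most one $\cB$-atom) for each $R$-constraint and observing that each replaced alien constraint contributes at most one new $\cB$-atom, so the parameter stays at $k$ rather than merely being bounded by $g(k)$. The bookkeeping points you flag (fresh existential variables per occurrence, eliminating equality atoms by identification) are handled exactly as in the standard pp-definability reduction, so nothing further is needed.
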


We proceed by relating $\alien{k}{\cA}{\cB}$ to the important idea of reducing to a core (recall Section~\ref{sec:prelims}). 
Recall that $\cA^c$ denotes the (unique up to isomorphism) core of a finite-domain structure $\cA$. For two structures $\cA \cup \cB$ we similarly write $(\cA \cup \cB)^c$ for the core. Specifically, if $e \in \End(\cA \cup \cB)$ has minimal range, then the core consists of $\{e(R) \mid R \in \cA\} \cup \{e(R) \mid R \in \cB\}$ of the same signature as $\cA$ and $\cB$, and the problem $\balien{}{(\cA \cup \cB)^c}$ is thus well-defined.

\begin{theorem} \label{thm:core-reduction}
$(\star)$
Let $\cA$ and $\cB$ be two structures over a finite universe $A$. Then $\alien{}{\cA}{\cB}$ and $\balien{}{(\cA \cup \cB)^c}$ are interreducible under both polynomial-time 
and fpt reductions.
\end{theorem}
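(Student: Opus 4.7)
The plan is to exhibit two mutual reductions that are simultaneously polynomial-time and parameter-preserving, so that they qualify as both polynomial-time and fpt-reductions at once. Write $(\cA \cup \cB)^c = e(\cA) \cup e(\cB)$, where $e \in \End(\cA \cup \cB)$ is an endomorphism with image of minimal cardinality, and note that by definition of a polymorphism (applied $\ar(R)$ times to the same tuple) we have $e(R) \subseteq R$ for every $R \in \cA \cup \cB$. This single containment, together with the defining property that $(t_1,\dots,t_n) \in R$ implies $(e(t_1),\dots,e(t_n)) \in e(R)$, is what drives both reductions.

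For the forward direction I would map an instance $I = (V, C_1 \cup C_2)$ of $\alien{}{\cA}{\cB}$ (with parameter $k \geq |C_2|$) to $I' = (V, C_1' \cup C_2')$ by replacing each constraint $R(\bar{x}) \in C_i$ with $e(R)(\bar{x}) \in C_i'$. The disjoint-signature assumption in force throughout the paper ensures that this replacement preserves the base/alien labelling, so $|C_2'| = |C_2| \leq k$. Correctness: if $f \colon V \to A$ satisfies $I$, then $e \circ f \colon V \to e(A)$ satisfies $I'$ directly from the definition of $e(R)$; conversely, any satisfying assignment $g$ of $I'$ also satisfies $I$ because $e(R) \subseteq R$ for every $R$.

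For the backward direction I would apply the mirror construction: given an instance of $\balien{}{(\cA \cup \cB)^c}$, replace each constraint $e(R)(\bar{x})$ by $R(\bar{x})$ using the canonical correspondence between the signatures of $\cA \cup \cB$ and $(\cA \cup \cB)^c$. A solution of the original instance is again a solution after the swap by $e(R) \subseteq R$, and a solution $h$ of the new instance gives $e \circ h$ as a solution of the original, so satisfiability is preserved. The partition of constraints and hence the parameter is once again untouched.

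Both maps act locally on each constraint, replacing a relation symbol by the symbol for its $e$-image (or vice versa); since $e$ depends only on the fixed structures $\cA$ and $\cB$, this can be performed in linear time, yielding polynomial-time reductions. Because the parameter is preserved identically ($k' = k$), both reductions are in addition fpt-reductions, which is what the theorem demands. The only delicate point, and the one most worth stressing in the write-up, is the role of the disjoint-signatures convention: without it the ``base'' and ``alien'' parts of the core would not be unambiguously defined, and $\balien{}{(\cA \cup \cB)^c}$ would not even be well-specified.
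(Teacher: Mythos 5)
Your proposal is correct and follows essentially the same route as the paper: decompose the core as $e(\cA)\cup e(\cB)$ for a minimal-range endomorphism $e$, replace each constraint $R(\bar{x})$ by $e(R)(\bar{x})$ (and vice versa), and observe that the parameter is preserved exactly. The only difference is that you spell out the satisfiability equivalence (via $e(R)\subseteq R$ and composing solutions with $e$), which the paper dismisses as well-known.
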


In general, it is \emph{not} possible to reduce from $\alien{k}{\cA}{\cB}$  to 
${\alien{k}{\cA^c}{\cB}}$ or from ${\alien{k}{\cA}{\cB}}$ to ${\alien{k}{\cA}{\cB^c}}$. 
This can be seen as follows. 
Consider the Boolean relation
$R(x_1,x_2,x_3) \equiv x_1 = x_2 \lor x_2 = x_3$, 
and let $\cA = \{R\}$, $\cB = \{\neq\}$.
Then, $\alien{1}{\cA}{\cB}$ is \NP-hard
(see e.g. Exercise~3.24~in~\cite{chen2006rendezvous})
so $\alien{}{\cA}{\cB}$ is \paraNP-hard.
However, $\cA$ is $0$-valid, so $\cA^c = \{\{(0,0,0)\}\}$,
implying that $\alien{}{\cA^c}{\cB}$ is in \Poly.

\subsection{Fixed-Parameter Tractability} 
\label{sec:general-fpt}

We present an algorithm in this section that underlies many of our fixed-parameter tractability results and it is based on a particular notion of definability.
The \emph{existential
fragment} of first-order logic consists of formulas that only use the operations negation, conjunction, disjunction, and
existential quantification, while the \emph{existential positive} fragment additionally
disallows negation. We emphasize that it is required that the equality relation is
allowed in existential (positive) definitions. 
We can view existential positive in a different way that is easier to use
in our algorithm.
Let $\cA$ be a structure
with domain $A$ and
assume that $R \subseteq A^m$ is defined via a
existential positive definition over $\cA$, i.e., 
$R(x_1,\dots,x_m) \equiv
\exists y_1,\dots,y_n \colon \;
\phi(x_1,\dots,x_m,y_1,\dots,y_n)$
where
$\phi$ is a quantifier-free existential positive $\cA$-formula.
Since $\phi$ can be written in disjunctive normal form without introducing negation or
quantifiers, it follows
that $R$ is a finite union of relations in $\langle \cA \rangle$.

\begin{theorem} \label{thm:fpt-result-variant2}
Assume the following.
\begin{enumerate}
\item
$\cA,\cB$ are structures
with the same domain $A$,

\item
every relation in $\cB$ is existential positive definable in
$\cA$,
and

\item
CSP$(\cA)$ is in P.
\end{enumerate}
Then $\alien{}{\cA}{\cB}$ is in \FPT parameterized by \numac{}.
\end{theorem}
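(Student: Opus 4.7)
The plan is to reduce the problem, via exhaustive branching on the alien constraints, to polynomially many instances of $\csp{\cA}$, which is solvable in polynomial time by hypothesis.

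The starting point is the observation already made just before the theorem statement: every existential positive $\cA$-definable relation is a finite union of relations in $\cclone{\cA}$. Indeed, put the quantifier-free matrix of the existential positive definition in disjunctive normal form; each disjunct is a conjunction of $\cA$-atoms and equalities, hence pp-definable over $\cA$, and existential quantification distributes over the finite disjunction. Since $\cB$ is finite, we may precompute, once and for all, for every $R \in \cB$ a representation $R = R^{(1)} \cup \cdots \cup R^{(n_R)}$ with each $R^{(i)} \in \cclone{\cA}$, and set $N := \max_{R \in \cB} n_R$, a constant depending only on $\cA$ and $\cB$.

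Given an instance $I = (V, C_1 \cup C_2)$ of $\alien{}{\cA}{\cB}$ with $|C_2| \leq k$, the algorithm enumerates all functions $\sigma$ that assign to each constraint $R(\mathbf{x}) \in C_2$ an index $\sigma(R(\mathbf{x})) \in \{1, \ldots, n_R\}$. There are at most $N^k$ such functions. For each $\sigma$, form the instance $I_\sigma$ by replacing each alien constraint $R(\mathbf{x}) \in C_2$ by the atom $R^{(\sigma(R(\mathbf{x})))}(\mathbf{x})$. Clearly $I$ is satisfiable iff $I_\sigma$ is satisfiable for some $\sigma$, since a solution of $I$ must, for each alien constraint, land in some disjunct.

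Each $I_\sigma$ is an instance of $\csp{\cA \cup \cA^*}$ where $\cA^* \subseteq \cclone{\cA}$ is the finite set of all relations $R^{(i)}$ arising from the decomposition. By Theorem~\ref{thm:pp-red}, $\csp{\cA \cup \cA^*}$ reduces in polynomial time to $\csp{\cA}$, which is in \Poly by assumption. Thus each $I_\sigma$ can be decided in polynomial time, and the total running time is $N^k \cdot \text{poly}(|I|)$, which establishes membership in \FPT\ parameterized by $\#\mathsf{ac}$.

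There is no real obstacle in this argument; the only point that requires slight care is that the branching factor $N$ depends only on the fixed structures $\cA$ and $\cB$ (which is immediate because $\cB$ is finite and each ep-definition is fixed), ensuring the $N^k$ factor is of the shape $f(k)$ as required by the definition of \FPT.
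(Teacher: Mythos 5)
Your proposal is correct and matches the paper's own proof in all essentials: both decompose each alien relation into a finite union of pp-definable relations, branch over the $\le N^k$ choices of disjuncts (the paper phrases this as a recursion of depth $k$ and branching factor $b$, which is the same enumeration), and solve each resulting instance in polynomial time via Theorem~\ref{thm:pp-red}. No further comment is needed.
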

\begin{proof}
Assume $\cB = \{A;B_1,\dots,B_m\}$.
Condition 2.\ implies that $B_i$, $i \in [m]$, is a
finite union of relations $B_i=R^1_i \cup \dots \cup R^{c_i}_i$
where $R^1_i,\dots,R^{c_i}_i$ are in $\langle \cA \rangle$.
Let the structure $\cA^*$ contain the relations in $\cA \cup \{R_i^j \; | \; i \in [m] \; \textrm{and} \; j \in [c_i]\}$.
Clearly, $\cA^*$ has a finite signature and the problem CSP$(\cA^*)$ is
in \Poly by Theorem~\ref{thm:pp-red} since every relation in $\cA^*$ is a member of $\langle \cA \rangle$. Let $b=\max \{c_i \; | \; i \in [m]\}$.

Let $((V,C),k)$ denote an arbitrary instance of
$\alien{}{\cA}{\cB})$.
The satisfiability of $(V,C)$ can be
checked via the following procedure.
If $C$ contains no $\cB$-constraint, then check the satisfiability
of $(V,C)$ with the polynomial-time algorithm for CSP$(\cA)$.
Otherwise,
pick one constraint $c=B_i(x_1,\dots,x_q)$ with $B_i \in \cB$ and
check recursively the satisfiability of the following
instances: \[(V,(C \setminus \{c\}) \cup \{R_{i}^1(x_1,\dots,x_q)\}),\dots,(V,(C \setminus \{c\}) \cup \{R_{i}^{c_i}(x_1,\dots,x_q)\}).\]
If at least one of the instances
is satisfiable, then answer "yes" and otherwise "no". This is clearly a correct algorithm for $\alien{}{\cA}{\cB}$.

We continue with the complexity analysis. Note that the
leaves in the computation tree produced by the algorithm are CSP$(\cA^*)$ instances and they are consequently
solvable in polynomial time.
The depth of the computation tree is at most $k$ (since $(V,C)$ contains at most
$k$ $\cB$-constraints) and each node
has at most $b$ children. Thus, the problem can be solved in
$b^k \cdot \textrm{poly}(|I|)$ time.
We conclude that $\alien{}{\cA}{\cB}$ is in \FPT parameterized by \numac{} since $b$ is a fixed constant that only depends on
the structures $\cA$ and $\cB$.
\end{proof}

\section{Finite-Domain Languages}
\label{sec:findom-languages}

This section is devoted to CSPs over finite domains. 
We begin in Section~\ref{sec:constants-finite} by studying how the definability
of constants affect the complexity of finite-domain CSPs with alien constraints, 
and we use this as a cornerstone for a parameterized \FPT versus \paraNP
dichotomy result for
of $\alien{}{\cA}{\cB}$. We show a sharper result for Boolean structures
in Section~\ref{sec:class-boolean}.

\subsection{Parameterized Dichotomy}
\label{sec:constants-finite}
 
We begin with a simplifying result.
For a finite set $A$, let $\cC_A$ be the structure
whose relations are the constants over $A$.

\begin{lemma} \label{lem:constant-reduction-stronger}
$(\star)$    
  Let $\mathcal{A}$ be a structure over a domain $A$. For every $\mathcal{C} \subseteq \mathcal{C}_A$,
  CSP$(\mathcal{A} \cup \mathcal{C})$ is polynomial-time reducible to $\alien{|\mathcal{C}|}{\cA}{\cC}$.
\end{lemma}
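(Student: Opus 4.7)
The plan is to exploit the fact that many $\mathcal{C}$-constraints can be ``compressed'' into just $|\mathcal{C}|$ of them: any variable constrained to equal a constant $a$ is forced to take the value $a$ in every solution, so all such variables can be identified with a single representative without changing satisfiability. Write $\mathcal{C}=\{c_{a_1},\dots,c_{a_m}\}$ where $c_{a_i}$ denotes the singleton relation $\{(a_i)\}$, so $m=|\mathcal{C}|$, and let $I=(V,C)$ be an instance of $\csp{\mathcal{A}\cup\mathcal{C}}$. Partition $C=C_{\mathcal A}\cup C_{\mathcal C}$ into the $\mathcal{A}$-constraints and the $\mathcal{C}$-constraints (possible by the disjoint-signature convention), and for each $i\in[m]$ let $V_i=\{x\in V : c_{a_i}(x)\in C_{\mathcal C}\}$.

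The reduction $f$ proceeds as follows. First, check whether some variable $x$ lies in both $V_i$ and $V_j$ for $i\neq j$; since $a_i\neq a_j$, such an $x$ is forced to take two distinct values, so $I$ is unsatisfiable, and we output a trivially unsatisfiable instance of $\alien{|\mathcal{C}|}{\mathcal{A}}{\mathcal{C}}$ (for instance, a single fresh variable $v$ with the two conflicting constant constraints $c_{a_i}(v), c_{a_j}(v)$, which uses $2\le|\mathcal{C}|$ alien constraints in this case). Otherwise the sets $V_1,\dots,V_m$ are pairwise disjoint; for each $i$ with $V_i\neq\emptyset$ pick a representative $v_i\in V_i$ and let $\pi\colon V\to V'$ be the surjection that sends every $x\in V_i$ to $v_i$ and fixes every other variable. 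Define the new instance $I'=(V',C')$ where $C'=\pi(C_{\mathcal A})\cup\{c_{a_i}(v_i):V_i\neq\emptyset\}$; here $\pi(C_{\mathcal A})$ is obtained by renaming the variables in each constraint of $C_{\mathcal A}$ using $\pi$, so each constraint remains an $\mathcal{A}$-constraint. The number of $\mathcal{C}$-constraints in $C'$ is at most $m=|\mathcal{C}|$, so $I'$ is a legal instance of $\alien{|\mathcal{C}|}{\mathcal{A}}{\mathcal{C}}$.

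For correctness, suppose $h\colon V\to A$ satisfies $I$. Then $h(x)=a_i$ for every $x\in V_i$, so $h$ factors through $\pi$: there exists $h'\colon V'\to A$ with $h=h'\circ\pi$, and $h'$ satisfies each renamed $\mathcal{A}$-constraint as well as each $c_{a_i}(v_i)$ since $h'(v_i)=h(v_i)=a_i$. Conversely, from a solution $h'$ of $I'$, the composition $h=h'\circ\pi$ satisfies every constraint in $C_{\mathcal A}$ and sends each $x\in V_i$ to $h'(v_i)=a_i$, so it satisfies $C_{\mathcal C}$ too. The reduction is clearly computable in polynomial time (building the $V_i$, detecting conflicts, renaming).

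The only subtle point is the collision case where a variable is in several $V_i$'s, which I handle by emitting an explicitly unsatisfiable target instance; this is why the bound $|\mathcal C|$ (rather than some smaller quantity) is exactly right, and why no further assumption on $\mathcal{A}$ is needed. Everything else is a clean identification of forced-equal variables, and I do not anticipate any technical obstacle beyond this bookkeeping.
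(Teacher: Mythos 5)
Your proof is correct and follows essentially the same route as the paper's: identify all variables constrained to the same constant with a single representative, keeping one alien constraint per relation in $\mathcal{C}$. The only difference is that you treat the collision case (a variable under two distinct constant constraints) explicitly, whereas the paper's identification procedure handles it implicitly by producing an instance with two conflicting constant constraints on the merged variable; both are fine.
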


Lemma~\ref{lem:constant-reduction-stronger} together with the basic algebraic results from Section~\ref{sec:algebra} allows us to prove the following result
that combines a more easily formulated fixed-parameter result (compared to
Theorem~\ref{thm:fpt-result-variant2}) with a powerful
hardness result. 

\begin{theorem} \label{thm:findomcsp}
$(\star)$
  Let $\cA,\cB$ be structures
  with finite domain $D$.
  Assume that $\cA 
  \cup \cB$ is a core.
  If CSP$(\cA \cup \cC_A)$ is in \Poly, then
  $\alien{}{\cA}{\cB}$ is in \FPT with
  parameter \numac{}. Otherwise,
  $\alien{p}{\cA}{\cB}$ is \NP-hard for some
  $p$ that only depends on the structures $\cA$ and $\cB$.
\end{theorem}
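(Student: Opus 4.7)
For the FPT direction, suppose $\textsc{CSP}(\cA \cup \cC_A) \in \Poly$. Let $r$ be the maximum arity of relations in $\cB$ (a constant, since $\cB$ has a finite signature). Given an instance $I$ with $k = \numac$ alien atoms, I would simply branch: for each alien constraint $B(x_1,\dots,x_{\ar(B)})$, try each of its at most $|A|^r$ satisfying tuples, pinning down the values of $x_1,\dots,x_{\ar(B)}$. At every leaf, the alien constraints have been replaced by unary constant constraints (relations in $\cC_A$), so the residual instance is in $\textsc{CSP}(\cA \cup \cC_A)$ and is polynomial-time solvable by hypothesis. The total running time is $|A|^{rk}\cdot \textrm{poly}(|I|)$, which is \FPT in $k$.

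For the hardness direction, assume $\textsc{CSP}(\cA \cup \cC_A)$ is \NP-hard. Fix an enumeration $a_1,\dots,a_d$ of $A$ and let $O \subseteq A^d$ be the orbit of $(a_1,\dots,a_d)$ under $\Aut(\cA \cup \cB)$. My plan is a three-step chain: (i) $O \in \cclone{\cA \cup \cB}$; (ii) $\textsc{CSP}(\cA \cup \cC_A)$ reduces in polynomial time to $\alien{1}{\cA}{\{O\}}$; (iii) $\alien{1}{\cA}{\{O\}}$ reduces to $\alien{p}{\cA}{\cB}$ for a fixed $p$. For (i), via the finite-domain Galois connection $\cclone{\cA \cup \cB} = \Inv(\Pol(\cA \cup \cB))$, it suffices to verify polymorphism-invariance of $O$: given an $m$-ary polymorphism $f$ of $\cA \cup \cB$ and tuples $t_i = (\phi_i(a_1),\dots,\phi_i(a_d)) \in O$ with $\phi_i \in \Aut(\cA \cup \cB)$, the diagonal composition $g(a) := f(\phi_1(a),\dots,\phi_m(a))$ is an endomorphism of $\cA \cup \cB$ and hence an automorphism by the core hypothesis, so $f(t_1,\dots,t_m) = (g(a_1),\dots,g(a_d)) \in O$.

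For (ii), given $(V,C)$ of $\textsc{CSP}(\cA \cup \cC_A)$, I would introduce fresh variables $y_1,\dots,y_d$, identify every variable constrained by $\{a_i\}(\cdot)$ with $y_i$ (conflicting constants refute the instance immediately), and append the single alien atom $O(y_1,\dots,y_d)$. In the forward direction, extend any solution by $y_i \mapsto a_i$. Conversely, any solution evaluates $(y_1,\dots,y_d)$ to some $(\sigma(a_1),\dots,\sigma(a_d))$ with $\sigma \in \Aut(\cA \cup \cB)$, and post-composing the whole assignment with $\sigma^{-1}$ (which preserves every $\cA$-constraint as an automorphism) yields a solution with $y_i = a_i$, hence a solution of the original instance. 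Step (iii) is then immediate from Theorem~\ref{thm:generalized-ppdefs} applied to $\cB^* = \{O\}$, which lies in $\cclone{\cA \cup \cB}$ by (i): the parameter bound depends only on $\cA, \cB, \{O\}$, so a one-alien-atom instance maps into $\alien{p}{\cA}{\cB}$ for a constant $p$.

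The subtle step is (i). One might naively try to pp-define each singleton $\{a_i\}$ in $\cA \cup \cB$ and combine Lemma~\ref{lem:constant-reduction-stronger} with Theorem~\ref{thm:generalized-ppdefs}, but as already observed in Section~\ref{sec:algebra} for $(\{0,1\};\neq)$, singletons are in general \emph{not} pp-definable in a core. The core property does, however, yield pp-definability of the orbit of the \emph{full} enumeration, which is precisely what the argument needs: one alien atom over $\{O\}$ plus variable identifications simultaneously encodes every constant constraint up to automorphism, and the automorphism-invariance of the remaining $\cA$-constraints lets us normalize any solution back to the specific constants.
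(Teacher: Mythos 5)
Your proof is correct and follows essentially the same route as the paper's: the same direct branching for the FPT direction (the paper routes it through Theorem~\ref{thm:fpt-result-variant2}, whose algorithm is exactly your branching), and for hardness the same orbit relation $O$ (the paper's $E$), the same replacement of the constant constraints by a single $O$-atom, and the same appeal to Theorem~\ref{thm:generalized-ppdefs}. The only difference is that you prove $O \in \cclone{\cA \cup \cB}$ from scratch via the Galois connection, where the paper cites the proof of Theorem~17 in Barto et al.; your argument is the standard one and is correct.
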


\begin{proof}
    We provide a short proof sketch, the full proof is in Appendix~\ref{sec:proof_of_thm:findomcsp}.
    Using the dichotomy of finite domain CSPs~\cite{Bulatov:focs2017,Zhuk:jacm2020}, we first assume CSP$(\cA \cup \cC_D)$ is in \Poly. One can prove that every tuple over $D$ is pp-definable over $\cA \cup \cC_D$ and then that each relation in $\cB$ is existential positive definable over $\cA \cup \cC_D$. We can now apply Theorem~\ref{thm:fpt-result-variant2}, and $\alien{}{\cA}{\cB}$ is in \FPT.

For the \NP-hard case, we assume CSP$(\cA \cup \cC_D)$ is \NP-hard and construct a polynomial-time reduction from CSP$(\cA \cup \cC_D)$ to $\alien{p}{\cA}{\cB}$. We use the endomorphisms of $\cA \cup \cB$ to construct a pp-definable relation $E$ which allow us to simulate the constant relations, and a reduction to $\alien{1}{\cA}{\{E\}}$ to establish the claim via Lemma~\ref{lem:constant-reduction-stronger} and Theorem~\ref{thm:generalized-ppdefs}.
\end{proof}

Theorem~\ref{thm:findomcsp} has broad applicability.
Let us, for instance, consider a structure $\cA$ with finite domain $A$ and containing a finite number of relations from $\Inv(f)$ where
$f \colon A^m \rightarrow A$ is idempotent ($f:A^m \rightarrow D$ is \emph{idempotent} if
$f(a,\dots,a)=a$ for all $a \in A$.)
If $\csp{\cA}$ is in \Poly, then $\csp{\cA \cup \cC_A}$ is in \Poly since constant
relations are invariant under $f$. 
Hence, $\alien{}{\cA}{\cB}$ is in \FPT parameterized by \numac{} for \emph{every}
finite structure $\cB$ with domain $A$ by Theorem~\ref{thm:findomcsp}.
Idempotent functions that give rise to polynomial-time solvable
CSPs are fundamental and well-studied in the literature; see e.g.\ the survey by Barto et al.~\cite{DBLP:conf/dagstuhl/BartoKW17}.

Via Theorem~\ref{thm:core-reduction} we obtain the following parameterized complexity dichotomy separating problems in \FPT from \paraNP-hard problems.

\begin{corollary}
\label{cor:param-dichotomy}
Let $\cA,\cB$ be structures
over the finite domain $A$.
Then, $\alien{}{\cA}{\cB}$ is either in \FPT or \paraNP-hard (in parameter \numac{}).
\end{corollary}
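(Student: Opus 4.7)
The plan is to derive the dichotomy by combining Theorem~\ref{thm:core-reduction} (which lets us pass to the core without loss of generality) with Theorem~\ref{thm:findomcsp} (which supplies the FPT/$\paraNP$-hard dichotomy for cores).

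First I would reduce to the core case. By Theorem~\ref{thm:core-reduction}, $\alien{}{\cA}{\cB}$ is interreducible under fpt-reductions with $\balien{}{(\cA \cup \cB)^c}$. Picking an endomorphism $e$ of $\cA \cup \cB$ of minimal range, the core decomposes as $(\cA \cup \cB)^c = \cA' \cup \cB'$, where $\cA' = \{e(R) \mid R \in \cA\}$ and $\cB' = \{e(R) \mid R \in \cB\}$. The structure $\cA' \cup \cB'$ is a core by construction, and since $e$ acts relation-by-relation, $\cB$-constraints map to $\cB'$-constraints without changing their count; thus the ``core version'' is $\alien{}{\cA'}{\cB'}$ with the parameter \numac{} preserved up to a computable function.

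Next I would apply Theorem~\ref{thm:findomcsp} to the core pair $(\cA', \cB')$, which splits into two exclusive cases. If CSP$(\cA' \cup \cC_{A'}) \in \Poly$, then $\alien{}{\cA'}{\cB'} \in \FPT$, and by the fpt-equivalence of step one, $\alien{}{\cA}{\cB} \in \FPT$. Otherwise there is a fixed $p$ (depending only on $\cA',\cB'$) such that $\alien{p}{\cA'}{\cB'}$ is \NP-hard, so $\alien{}{\cA'}{\cB'}$ is $\paraNP$-hard. To transfer $\paraNP$-hardness back, note that the fpt-reduction from $\balien{}{(\cA \cup \cB)^c}$ to $\alien{}{\cA}{\cB}$ maps parameter value $p$ to at most $g(p)$ for some computable $g$; since $p$ is a constant, so is $g(p)$, and hence NP-hardness lands inside finitely many slices $\alien{p'}{\cA}{\cB}$ with $p' \leq g(p)$. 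By pigeonhole at least one such slice is \NP-hard, witnessing $\paraNP$-hardness of $\alien{}{\cA}{\cB}$.

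The argument is essentially bookkeeping on top of Theorems~\ref{thm:core-reduction} and~\ref{thm:findomcsp}, and the only point that needs a moment of care is making sure the reduction to the core respects the alien/native split and preserves the numerical parameter bound; both are immediate from the way the core is constructed relation-wise via a single endomorphism. Since there are no further subtleties, I do not foresee any genuine obstacle beyond writing out this bookkeeping cleanly.
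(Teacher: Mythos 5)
Your proposal is correct and follows essentially the same route as the paper: pass to the core $(\cA\cup\cB)^c$ via Theorem~\ref{thm:core-reduction}, then invoke Theorem~\ref{thm:findomcsp} (equivalently, the CSP dichotomy for $\csp{\cA'\cup\cC_{A'}}$) and transfer the resulting \FPT membership or \paraNP-hardness back along the fpt-reduction. Your extra bookkeeping on the parameter transfer is harmless and in fact trivializes here, since the core reduction preserves \numac{} exactly.
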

\begin{proof}
Let $e \in \End(\cA \cup \cB)$ have minimal range and let $\cA' = \{e(R) \mid R \in \cA\}$ and $\cB' = \{R \mid R \in \cB\}$ be the two components of the core $(\cA \cup \cB)^c$, and let  $A' = \{e(a) \mid a \in A\}$ be the resulting domain.
The problems $\alien{}{\cA}{\cB}$ and
$\alien{}{\cA'}{\cB'}$ are fpt-interreducible by Theorem~\ref{thm:core-reduction}.
The problem $\csp{\cA' \cup  \cC_{A'}}$ is either in \Poly or is \NP-hard by the CSP dichotomy theorem~\cite{Bulatov:focs2017,Zhuk:jacm2020}.
In the first case,  
$\alien{}{\cA'}{\cB'}$ (and thus $\alien{}{\cA}{\cB}$)
is in \FPT with parameter \numac{}.
Otherwise, $\alien{}{\cA'}{\cB'}$ is \paraNP-hard, and 
the fpt-reduction from $\alien{}{\cA'}{\cB'}$ to $\alien{}{\cA}{\cB}$ 
establishes \paraNP-hardness for the latter.
\end{proof}

Corollary~\ref{cor:param-dichotomy} must be used with caution:
it does not imply that $\alien{1}{\cA}{\cB}$
is \NP-hard and results such as
Theorem~\ref{thm:redundant-hybrid}
may not be applicable.
This encourages the refinement of
coarse complexity results based on Theorem~\ref{thm:findomcsp}.
We use Boolean relations as an example of this in the next section.

\subsection{Classification of Boolean Languages}
\label{sec:class-boolean}

We present a complexity classification of
$\alien{}{\cA}{\cB}$ when $\cA$ and $\cB$ are Boolean structures
(Theorem~\ref{thm:boolean-classification}).
We begin with two auxiliary results and we define relations $c_0 = \{(0)\}$ and $c_1 = \{(1)\}$.

\begin{lemma} \label{lem:one-occurrence-pp} 
$(\star)$ Let $\cA$ be a Boolean structure where $c_0 \in \cclone{\cA}$. If an $n$-ary Boolean $R \neq \emptyset$ is not 0-valid then $c_1 \in \cclone{\mathcal{A} \cup \{R\}}_{\leq 1}$.
\end{lemma}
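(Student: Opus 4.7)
The plan is to exhibit an explicit pp-definition of $c_1$ that uses the relation $R$ exactly once. Since $c_0 \in \cclone{\cA}$, it suffices to build a formula over $\cA \cup \{c_0\} \cup \{R\}$ that uses $R$ as a single atom, because any atom $c_0(y_i)$ may then be expanded via its pp-definition in $\cA$ without touching the $R$-atom, so the total number of alien atoms remains $1$.

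First I would fix a witness tuple. Since $R \neq \emptyset$, pick any $(a_1,\dots,a_n) \in R$. Because $R$ is not $0$-valid, this tuple is not $(0,\dots,0)$, so the index set $S := \{i \in [n] \mid a_i = 1\}$ is nonempty. Then I would propose the pp-formula
\[
\phi(x) \;\equiv\; \exists y_1,\dots,y_n \colon\; R(y_1,\dots,y_n) \;\wedge\; \bigwedge_{i \in S}(y_i = x) \;\wedge\; \bigwedge_{i \notin S} c_0(y_i),
\]
and claim $\phi(x) \equiv c_1(x)$.

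The verification is immediate. If $x = 1$, then setting $y_i := 1$ for $i \in S$ and $y_i := 0$ for $i \notin S$ reproduces exactly the witness tuple $(a_1,\dots,a_n) \in R$, so $\phi(1)$ holds. Conversely, if $x = 0$, every conjunct forces $y_i = 0$, so the only candidate tuple is $(0,\dots,0)$, which does not belong to $R$ since $R$ is not $0$-valid; hence $\phi(0)$ fails. Thus $\phi$ defines $c_1$, uses a single $R$-atom, and otherwise uses $c_0$ and equality. Replacing each $c_0(y_i)$ by a pp-definition from $\cclone{\cA}$ yields a pp-definition of $c_1$ over $\cA \cup \{R\}$ using exactly one atom of $R$, so $c_1 \in \cclone{\cA \cup \{R\}}_{\leq 1}$ as required.

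There is no real obstacle; the only subtlety is to notice that the non-$0$-valid hypothesis is precisely what guarantees $S \neq \emptyset$ (so that the variable $x$ actually appears in the formula) and simultaneously rules out $x = 0$ as a solution. These two roles of the hypothesis are what make the construction work with a single alien atom rather than needing any additional gadgetry.
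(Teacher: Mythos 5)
Your proof is correct and is essentially the paper's own argument: both pick a (necessarily non-zero) witness tuple of $R$, identify the $1$-positions with $x$ and force the $0$-positions to $0$ via $c_0$, then expand the $c_0$-atoms by their pp-definitions in $\cA$ so that only the single $R$-atom is alien. The only difference is notational — you introduce fresh variables with explicit equality atoms where the paper substitutes $x$ and a single $y$ directly into the $R$-atom.
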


We say that a Boolean relation $R$ is \emph{invariant under complement} if
it is invariant under the operation $\{0 \mapsto 1,1 \mapsto 0\}$.
This is equivalent to
$(t_1,\dots,t_k) \in R$ if and only if
$(1-t_1,\dots,1-t_k) \in R$.

\begin{lemma} \label{lem:neq-reduction}
$(\star)$ Let $\cA$ be a Boolean structure with finite signature.
If $\cA$ is invariant under complement,
then CSP$(\cA \cup \{c_0,c_1\})$ is polynomial-time
reducible to 
$\alien{1}{\cA}{\{\neq\}}$.
\end{lemma}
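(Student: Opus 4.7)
The plan is to produce a direct polynomial-time many-one reduction that uses exactly one alien $\neq$-constraint and exploits the symmetry given by complement-invariance to ``anchor'' the polarity of the two simulated constants.

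Given an instance $I = (V, C)$ of CSP$(\cA \cup \{c_0, c_1\})$, let $V_0 = \{x \in V \mid c_0(x) \in C\}$ and $V_1 = \{x \in V \mid c_1(x) \in C\}$. If $V_0 \cap V_1 \neq \emptyset$, then $I$ is trivially unsatisfiable and we output a fixed no-instance. Otherwise, introduce two fresh variables $z_0, z_1$ and let $V' = (V \setminus (V_0 \cup V_1)) \cup \{z_0, z_1\}$. Define a substitution $\sigma\colon V \to V'$ by $\sigma(x) = z_0$ if $x \in V_0$, $\sigma(x) = z_1$ if $x \in V_1$, and $\sigma(x) = x$ otherwise. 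Let $C'$ consist of $R(\sigma(x_1), \dots, \sigma(x_k))$ for every $\cA$-constraint $R(x_1, \dots, x_k) \in C$, together with the single alien constraint $z_0 \neq z_1$. The instance $I' = (V', C')$ is an instance of $\alien{1}{\cA}{\{\neq\}}$, and the construction is clearly polynomial-time.

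For correctness, first suppose $I$ has a solution $f\colon V \to \{0,1\}$. Then $f(x) = 0$ for $x \in V_0$ and $f(x) = 1$ for $x \in V_1$. Setting $f'(z_0) = 0$, $f'(z_1) = 1$, and $f'(x) = f(x)$ for $x \in V \setminus (V_0 \cup V_1)$ satisfies every constraint of $C'$, including $z_0 \neq z_1$. Conversely, suppose $f'$ satisfies $I'$. The constraint $z_0 \neq z_1$ forces $\{f'(z_0), f'(z_1)\} = \{0, 1\}$. If $f'(z_0) = 0$ and $f'(z_1) = 1$, define $f$ on $V$ by $f(x) = f'(\sigma(x))$; this satisfies every $\cA$-constraint of $C$ (by construction) as well as every $c_0$- and $c_1$-constraint. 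If instead $f'(z_0) = 1$ and $f'(z_1) = 0$, let $\overline{f'}$ be the pointwise complement of $f'$; by the hypothesis that $\cA$ is invariant under complement, $\overline{f'}$ still satisfies all $\cA$-constraints in $C'$, and now $\overline{f'}(z_0) = 0, \overline{f'}(z_1) = 1$, so we proceed as in the previous case using $\overline{f'}$.

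No serious obstacle is expected: the crux is the observation that complement-invariance converts a ``$z_0$ and $z_1$ take opposite values'' constraint, enforceable by a single $\neq$, into genuine simulation of the two constants, and both directions of correctness are checked by a straightforward case split on the values of $z_0$ and $z_1$ in a hypothetical solution.
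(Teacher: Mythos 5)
Your proof is correct and takes essentially the same route as the paper's: identify all variables forced to $0$ (resp.\ $1$) into a single representative $z_0$ (resp.\ $z_1$), replace the two constant constraints by the single alien constraint $z_0 \neq z_1$, and use complement-invariance to fix the polarity of a solution. Your write-up merely makes explicit the degenerate cases and the case split in the backward direction, which the paper leaves as ``not difficult to verify.''
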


We are now ready for analysing the complexity of $\alien{}{\cA}{\cB}$
when $\cA$ and $\cB$ are Boolean structures. 
We use a simplifying concept:
a \emph{0/1}-pair $(R_0,R_1)$ contains two Boolean relations
where $R_0$ is 0-valid but not 1-valid and $R_1$ is 1-valid but not
0-valid.

\begin{theorem} \label{thm:boolean-classification}
Let $\cA$ and $\cB$ be Boolean structures
 such that
CSP$(\cA)$ is in \Poly and CSP$(\cA \cup \cB)$
is \NP-hard. Then the following holds.

\begin{enumerate}

\item
If $\cA$ is Schaefer, then $\alien{}{\cA}{\cB}$ is in \FPT with parameter \numac{}. 

\item
If (i) $\cA$ is not Schaefer, (ii) $\cA$ is
both 0- and 1-valid, (iii) $\cB$ contains a $0/1$-pair, and (iv) 
$\cB$ is 0- or 1-valid, then 
$\alien{2}{\cA}{\cB}$  is \NP-hard and
$\alien{1}{\cA}{\cB}$  is in \Poly.

\item
Otherwise, $\alien{1}{\cA}{\cB}$ is 
\NP-hard.
\end{enumerate}
\end{theorem}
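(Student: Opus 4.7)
My overall strategy unifies the three cases under a single lens: combine Schaefer's theorem with $\cclone{\cdot}_{\leq k}$-pp-definability so that, in each case, either a single alien atom is provably redundant (yielding tractability) or a bounded number of alien atoms suffice to pp-define both constants $c_0,c_1$, after which Theorem~\ref{thm:generalized-ppdefs} lifts the Schaefer-\NP-hardness of $\csp{\cA\cup\{c_0,c_1\}}$ to $\alien{k}{\cA}{\cB}$.

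For Case~1, I would use that each of the four Schaefer polymorphisms is idempotent and hence preserves the constants, so $\csp{\cA\cup\cC_A}\in\Poly$ by Schaefer; \NP-hardness of $\csp{\cA\cup\cB}$ then rules out $0$- and $1$-validity, so the core has domain $\{0,1\}$ with the Schaefer polymorphism intact, and Theorem~\ref{thm:core-reduction} followed by Theorem~\ref{thm:findomcsp} gives $\alien{}{\cA}{\cB}\in\FPT$. For Case~2, I would read condition~(iv) as ``every relation in $\cB$ is $0$- or $1$-valid'', so that any single alien atom admits the all-$0$ or all-$1$ assignment and $\alien{1}{\cA}{\cB}\in\Poly$ by~(ii). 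For the $k=2$ hardness, the $0/1$-pair $(R_0,R_1)$ lets me pp-define $c_1$ with one $R_1$-atom (its diagonal is $\{1\}$) and symmetrically $c_0$ with one $R_0$-atom, so $\{c_0,c_1\}\subseteq\cclone{\cA\cup\cB}_{\leq 2}$; since $\cA$ is not Schaefer, $\csp{\cA\cup\{c_0,c_1\}}$ is \NP-hard by Schaefer and reduces via Theorem~\ref{thm:generalized-ppdefs} using two anchor variables.

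For Case~3, Schaefer forces $\cA$ to be $0$- or $1$-valid. In sub-case~3a (condition~(ii) fails), WLOG $\cA$ is $1$-valid but not $0$-valid; a diagonal pp-defines $c_1\in\cclone{\cA}$, \NP-hardness forces some $R\in\cB$ to be non-$1$-valid, and the symmetric version of Lemma~\ref{lem:one-occurrence-pp} delivers $c_0\in\cclone{\cA\cup\{R\}}_{\leq 1}$. In sub-case~3b ((ii) holds), I first note that~(iv) must fail (otherwise \NP-hardness of $\csp{\cA\cup\cB}$ together with~(ii) and~(iv) would force an implicit $0/1$-pair and place us in Case~2), so some $R\in\cB$ is neither $0$- nor $1$-valid. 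Fixing $t\in R$ and substituting fresh $u$ at coordinates where $t_i=0$ and fresh $v$ where $t_i=1$ produces a binary $R'\subseteq\{(0,1),(1,0)\}$ containing $(0,1)$. If $t$ can be chosen with $\bar t\notin R$, then $R'=\{(0,1)\}$ and one alien atom pins both constants; otherwise $R$ is complement-invariant and $R'$ equals the disequality relation, which I handle either by Lemma~\ref{lem:neq-reduction} combined with Proposition~\ref{prop:k=1-red} (when $\cA$ is itself complement-invariant) or, when $\cA$ is not, by using an asymmetric tuple $s\in S\in\cA$ with $\bar s\notin S$ to pp-define the implication relation $u\le v=\{(0,0),(0,1),(1,1)\}$ in $\cclone{\cA}$ (using that $\cA$ is $0$- and $1$-valid), whose conjunction with $R'$ equals $\{(0,1)\}$ and again pins both constants from a single alien atom.

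The main obstacle is the final part of sub-case~3b when $R$ is complement-invariant: a single alien atom then yields only the disequality relation, and closing the reduction requires a further case split on the complement-invariance of $\cA$ itself, invoking Lemma~\ref{lem:neq-reduction} in one branch and the implication pp-definability trick in the other.
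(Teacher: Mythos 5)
Your proposal is correct and follows essentially the same route as the paper: idempotence of the Schaefer polymorphisms plus Theorem~\ref{thm:findomcsp} for Case~1; diagonals of the $0/1$-pair to pp-define $c_0,c_1$ with two alien atoms (and the all-constant assignment for $k=1$ tractability) in Case~2; and in Case~3 the same gadgets — a one-atom constant via Lemma~\ref{lem:one-occurrence-pp} when (ii) fails, and otherwise the identification of the substituted relation as either $\{(0,1)\}$ or $\neq$, resolved by Lemma~\ref{lem:neq-reduction} or by conjoining with the implication relation obtained from an asymmetric tuple of $\cA$ (which is exactly the paper's $U$-construction). The only differences are organizational (you split Case~3 on condition~(ii) and derive the failure of~(iv), where the paper splits on pp-definability of $c_0$ and its (a)/(b) subcases), and these are equivalent.
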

\begin{proof}
Assume $\cA$ is Schaefer and let $\cA^+=\cA \cup \{c_0,c_1\}$.
The structure $\cA^+$ is clearly a core and
$\cA^+ \cup \cB$ is a core, too.
The problem
CSP$(\cA^+)$ is
in \Poly by Theorem~\ref{thm:schaefer-result} so Theorem~\ref{thm:findomcsp} implies that
$\alien{}{\cA^+}{\cB}$ (and naturally $\alien{}{\cA}{\cB}$) is in \FPT parameterized by \numac{}.
Since CSP$(\cA)$ is in \Poly, we know from Theorem~\ref{thm:schaefer-result}
that
$\cA$ is 0-valid, 1-valid or Schaefer.
We assume henceforth that $\cA$ is 0-valid and not Schaefer; the other case is analogous.
If $\cB$ is 0-valid, then CSP$(\cA \cup \cB)$ is trivially in \Poly and this is ruled out by our initial assumptions. We assume henceforth that $\cB$ is not 0-valid
and consider two cases depending on whether $c_0$ is pp-definable
in $\cA$ or not.

\smallskip

\noindent
\emph{Case 1.}
$c_0$ is pp-definable in $\cA$.
We know that CSP$(\cA \cup \{c_0,c_1\})$ is \NP-hard by Theorem~\ref{thm:schaefer-result} since $\cA$ is not Schaefer.
We can thus assume that CSP$(\cA \cup \{c_1\})$ 
is \NP-hard.
Lemma~\ref{lem:constant-reduction-stronger} implies that
CSP$_{\leq 1}(\cA \cup \{c_1\})$ is \NP-hard.
The relation $c_1$ is in
$\langle \cA \cup \cB \rangle_{ \cB \leq 1}$
by Lemma~\ref{lem:one-occurrence-pp} so
we conclude that 
CSP$_{\leq 1}(\cA \cup \cB)$ is \NP-hard.

\smallskip

\noindent
\emph{Case 2.} $c_0$ is not pp-definable in $\cA$.
This implies that every relation in $\cA$ is simultaneously
0- and 1-valid. To see this, assume to the contrary that
$\cA$ contains a relation that is not 1-valid.
Then, $x=0 \Leftrightarrow R(x,\dots,x)$ and $c_0$ is
pp-definable in $\cA$. This implies that $\cB$ contains
(a) a relation that is not invariant under any constant operation or (b)
every relation is closed under a constant operation and $\cB$
contains a $0/1$-pair. Note that if (a) and (b) does not hold, then
$\cB$ is invariant under a constant operation and
CSP$(\cA \cup \cB)$ is trivially in \Poly.

\smallskip

\noindent
\emph{Case 2(a).}
There is a
a relation $R$ in $\cB$ that is not invariant under any constant operation, i.e. $(0,\dots,0) \not\in R$
and $(1,\dots,1) \not\in R$. The relation $R$ has arity $a \geq 2$.
Let $t$ be the tuple in $R$ that contains the maximal number $b$
of 0:s. Clearly, $b < a$. We assume that the arguments are permuted
so that $t$ begins with $b$ 0:s and continues with $a-b$ 1:s.
Consider the pp-defintion
\[S(x,y) \equiv R(\underbrace{x,\dots,x}_{b \; \textrm{occ.}},\underbrace{y,\dots,y}_{a-b \; \textrm{occ.}}).\]
There are two possibilities: either $S(x,y) \Leftrightarrow x=0 \wedge y=1$
or $S(x,y) \Leftrightarrow x \neq y$. In the first case we are done
since CSP$(\cA \cup \{c_0,c_1\})$ is \NP-hard (recall that
$\cA$ is not Schaefer) and
$\alien{1}{\cA}{\cB}$  is easily seen to be
\NP-hard by Lemma~\ref{lem:constant-reduction-stronger}.
Let us consider the second case.
If $\cA$ is invariant under complement, then 
$\alien{1}{\cA}{\cB}$  is \NP-hard
by Lemma~\ref{lem:neq-reduction}.
If $\cA$ is not invariant under complement, then we claim that $c_0$ and
$c_1$ can be pp-defined with the aid of $\neq$.
Arbitrarily choose a relation $T$ in $\cA$ that contains
a tuple $t=(t_1,\dots,t_a)$ such that
$(1-t_1,\dots,1-t_a) \not\in T$---note that $t$ cannot be a constant tuple
since both $(0,\dots,0)$ and $(1,\dots,1)$ are in $T$.
Assume that $t$ contains $b$ 0:s and that the arguments
are permuted so that $t$ begins with $b$ 0:s followed by
$a-b$ 1:s.
Consider the pp-definition
\[U(x,y) \equiv x \neq y \wedge T(\underbrace{x,\dots,x}_{b \; \textrm{occ.}},\underbrace{y,\dots,y}_{a-b \; \textrm{occ.}}).\]
The relation $U$ contains the single tuple $(0,1)$. We know that
CSP$(\cA \cup \{c_0,c_1\})$ is \NP-hard (recall that
$\cA$ is not Schaefer) and
Lemma~\ref{lem:constant-reduction-stronger} implies that
$\alien{2}{\cA}{\{c_0,c_1\}}$ is \NP-hard, too.
It is now easy to see that
$\alien{1}{\cA}{\cB}$  is \NP-hard via the definition of $U$.

\smallskip

\noindent
\emph{Case 2(b).}
Every relation in $\cB$ is closed under at least one
constant operation and $\cB$ contains a 0/1-pair $(R_0,R_1)$.
Since $\cA$ is both 0- and 1-valid,
it follows that
$\alien{1}{\cA}{\cB}$ 
is in \Poly.
The constant relations $c_0$ and $c_1$ are pp-definable
in $\{R_0,R_1\}$ since $x=0 \Leftrightarrow R_0(x,\dots,x)$ and
$x=1 \Leftrightarrow R_1(x,\dots,x)$.
This implies with the aid of Lemma~\ref{lem:constant-reduction-stronger}
that 
$\alien{2}{\cA}{\cB}$ is \NP-hard
since $\cA$ is not Schaefer.
\end{proof}

Theorem~\ref{thm:boolean-classification} carries over to Boolean \textsc{Redundant}$(\cdot)$,
\textsc{Equiv}$(\cdot)$
and \textsc{Impl}$(\cdot)$ 
by Lemma~\ref{lemma:equiveasy} combined with Theorem~\ref{thm:redundant-hybrid}, so
these problems are in \Poly
if and only if $\cA$ is Schaefer (case 2. in Theorem~\ref{thm:boolean-classification}
is not applicable when analysing these problems since
it requires $|\cB| \geq 2$). Otherwise, they are \NP-complete under polynomial-time Turing reductions.
The \emph{meta-problem} for Boolean CSPs with alien constraints is decidable,
i.e., there is an algorithm that decides for Boolean structures $\cA,\cB$
whether $\alien{}{\cA}{\cB}$ is in case 1., 2., or 3. of Theorem~\ref{thm:boolean-classification}.
This is obvious since we have polymorphism descriptions of the Schaefer languages.

\section{Infinite-Domain Languages}
\label{sec:infindom-languages}

We focus on infinite-domain CSPs in this section.
We begin Section~\ref{sec:infdom-general} by 
discussing certain problems when CSPs with alien constraints are generalized to infinite domains. Our conclusion is that restricting ourselves
to $\omega$-categorical structures is a viable first step: $\omega$-categorical
structures constitute a rich class of CSPs and we can generalize at least some of the machinery from Section~\ref{sec:findom-languages} to this setting.
We demonstrate this in Section~\ref{sec:equality-classification} where we obtain
a complete complexity classification for equality languages.

\subsection{Orbits and Infinite-Domain CSPs}
\label{sec:infdom-general}

It is not straightforward to tranfer the results in Section~\ref{sec:findom-languages} to the
infinite-domain regime.
First, let us consider Theorem~\ref{thm:fpt-result-variant2}. In contrast to finite domains,
relations in $\cB$ may not be finite unions of relations in
$\langle \cA \rangle$ or, equivalently, not being definable with an existential positive formula. 
Second, let us consider Theorem~\ref{thm:findomcsp}:
the proof is based on structures expanded with symbols for each
domain value and this leads to problematic structures with infinite signatures.
The proof is also based on the assumption that CSPs are either polynomial-time
solvable or \NP-complete, and this is no longer true~\cite{Bodirsky:Grohe:icalp2008}.
It is thus necessary to restrict our attention to some class
of structures with sufficiently pleasant properties.
A natural choice is
$\omega$-categorical
structures that
allows us to reformulate 
Theorem~\ref{thm:fpt-result-variant2} as follows.

\begin{theorem} \label{thm:fpt-result-variant}
$(\star)$
Assume the following.
\begin{enumerate}
\item
$\cA,\cB$ are structures
with the same countable (not necessarily infinite) domain $A$,

\item
$\cA$ and $\cB$ are $\omega$-categorical, 

\item
every relation in $\Orb(\cB)$ is existential primitive definable in $\langle \cA \rangle$, and

\item
CSP$(\cA)$ is in \Poly
\end{enumerate}
Then $\alien{}{\cA}{\cB}$ is in \FPT parameterized by \numac{}.
\end{theorem}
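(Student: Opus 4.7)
The plan is to mimic the finite-domain branching algorithm of Theorem~\ref{thm:fpt-result-variant2} but use $\omega$-categoricity to replace the ``finite union'' argument. The key observation is that, although a relation $B \in \cB$ need no longer be a finite union of pp-definable relations over $\cA$ just because $\cB$ happens to be existential positive definable over $\cA$, the $\omega$-categoricity of $\cB$ guarantees that $B$ is a union of only finitely many orbits of $\Aut(\cB)$. Each such orbit, by hypothesis~3, is itself existential positive definable in $\langle \cA \rangle$ and therefore (via DNF) a finite union of pp-definable relations over $\cA$. Composing, every $B \in \cB$ decomposes into a bounded number of pieces from $\langle \cA \rangle$.

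More precisely, I would first fix an enumeration $\cB = \{B_1,\dots,B_m\}$ (finite signature) and, for each $i$, list the finitely many orbits $O_i^1,\dots,O_i^{s_i}$ of $\Aut(\cB)$ whose union is $B_i$; finiteness here is exactly oligomorphicity of $\Aut(\cB)$. Applying condition~3 to each $O_i^j$ and putting the resulting existential positive formula in disjunctive normal form yields a finite decomposition $O_i^j = R_i^{j,1} \cup \dots \cup R_i^{j,c_{i,j}}$ with every $R_i^{j,\ell} \in \langle \cA \rangle$. Let $b$ be the maximum over $i$ of $\sum_j c_{i,j}$; this is a constant depending only on $\cA$ and $\cB$.

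The algorithm for $\alien{}{\cA}{\cB}$ is now essentially the one in Theorem~\ref{thm:fpt-result-variant2}: given $(V,C)$ with at most $k$ alien constraints, pick any alien constraint $B_i(x_1,\dots,x_q)$ and branch into at most $b$ subinstances, replacing it by $R_i^{j,\ell}(x_1,\dots,x_q)$ for each of the finitely many choices. Recurse; at depth at most $k$ every leaf is a CSP instance over $\langle \cA \rangle$ which, by Theorem~\ref{thm:pp-red} together with hypothesis~4, is solvable in polynomial time. The total running time is $b^k \cdot \mathrm{poly}(|I|)$, which is \FPT{} in \numac{}. Correctness is immediate from $B_i = \bigcup_{j,\ell} R_i^{j,\ell}$: the original instance is satisfiable iff some leaf is.

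The main obstacle, and the place where the infinite-domain setting genuinely differs from Theorem~\ref{thm:fpt-result-variant2}, is justifying the finite decomposition of each $B_i$. In the finite case this was automatic from DNF; here we need the two-step argument (finitely many orbits by $\omega$-categoricity, then finite DNF for each orbit) and we must verify that the whole decomposition can be computed and stored as part of the fixed overhead depending on $\cA,\cB$, so that the branching factor $b$ is indeed a constant rather than a function of the input. Everything else — the polynomial-time solvability at the leaves and the $k$-bounded recursion depth — transfers verbatim from the finite-domain proof.
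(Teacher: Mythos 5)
Your proposal is correct and follows essentially the same route as the paper: decompose each $\cB$-relation into finitely many orbits via oligomorphicity (the paper cites the Engeler--Svenonius--Ryll-Nardzewski characterization for this step), write each orbit as a finite union of relations in $\langle\cA\rangle$ using condition~3 and DNF, and then invoke the branching algorithm of Theorem~\ref{thm:fpt-result-variant2}. The only cosmetic difference is that you re-derive the branching algorithm explicitly where the paper simply applies Theorem~\ref{thm:fpt-result-variant2} as a black box once the finite-union property of each $\cB$-relation is established.
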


\begin{example}
Results related to Theorem~\ref{thm:fpt-result-variant} have been presented in the
literature.
Recall that RCC5 and RCC8 are spatial formalism with binary relations 
that are disjunctions
of certain basic relations~\cite{Randell:etal:kr92}. 
Li et al.~\cite{Li:etal:ai2015} prove that if $\cA$ is a polynomial-time
solvable RCC5 or RCC8 constraint language containing all basic relations, then  \textsc{Redundant}$(\cA)$ is in \Poly. This immediately follows from combining
Theorem~\ref{thm:redundant-hybrid} and
Theorem~\ref{thm:fpt-result-variant} since
RCC5 and RCC8 can be represented by $\omega$-categorical
constraint languages~\cite{Bodirsky:Chen:jlc2009,Bodirsky:Woelfl:ijcai2011}
and every RCC5/RCC8 relation is existential primitive definable in the structure
of basic relations
by definition. This result can be generalized to a much larger class of relations in the case of RCC5 since the orbits of $k$-tuples are pp-definable in the structure of basic relations~\cite[Proposition 35]{Bodirsky:Jonsson:jair2017}.
\end{example}

A general hardness result based on the principles behind Theorem~\ref{thm:findomcsp} does not
seem possible in the infinite-domain setting, even for $\omega$-categorical structures. 
The hardness proof in Theorem~\ref{thm:findomcsp}
utilizes variables given fixed values and a direct generalization
would lead to
groups of variables
that together form an orbit of an $n$-tuple. Such gadgets behave very differently
from variables given fixed values: in particular, they do not admit a result similar
to Lemma~\ref{lem:constant-reduction-stronger}. Thus, hardness results needs to be
constructed in other ways.

We know from Section~\ref{sec:algebra} that $\alien{}{\cA}{\cB}$ and $\balien{}{(\cA \cup \cB)^c}$ are the same
when $\cA$ and $\cB$ has the same finite domain.
We now consider a generalisation of cores to infinite domains from Bodirsky~ \cite{Bodirsky:Book}:
an $\omega$-categorical structure $\cA$ with countable domain
is a \emph{model-complete core} if every relation in $\Orb(\cA)$ is pp-definable in $\cA$. 
There is an obvious infinite-domain analogue of Theorem~\ref{thm:core-reduction}: 
if $\cA' \cup \cB'$ is the model-complete core
of $\cA \cup \cB$ (where $\cA,\cB$ are $\omega$-categorical structures over
a countable domain $A$), then
$\alien{}{\cA}{\cB}$ polynomial-time reduces to $\alien{}{\cA'}{\cB'}$.
Model-complete cores share many other properties with cores, too. 
With this said, it is interesting to understand model-complete cores
in the context of $\alien{}{\cA}{\cB}$, simply because they are so well-studied
and exhibit useful properties. We merely touch upon this subject by making
an observation that we use in Section~\ref{sec:equality-classification}.

\begin{lemma} \label{lem:modelcompletefpt}
$(\star)$
Let $\cA$ and $\cB$ denote $\omega$-categorical structures with a countable domain $A$. Assume that $\cA$ is a
model-complete core and CSP$(\cA)$ is in \Poly. Then, $\alien{}{\cA}{\cB}$
is in \FPT parameterized by \numac{} for every structure $\cB$ such that $\Orb(\cB) \subseteq \Orb(\cA)$. 
\end{lemma}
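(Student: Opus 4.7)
The plan is to derive the result as a direct corollary of Theorem~\ref{thm:fpt-result-variant} by verifying each of its four hypotheses under the assumptions of the lemma. Three of them come for free: conditions (1) and (2), that $\cA$ and $\cB$ are $\omega$-categorical structures over a common countable domain, are part of the statement of the lemma, and condition (4), that $\csp{\cA}$ lies in \Poly, is also among our assumptions. The real content is condition (3), which requires every relation in $\Orb(\cB)$ to be existential positive definable in $\cclone{\cA}$.

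For this, the model-complete core hypothesis does all the work. By definition, an $\omega$-categorical structure $\cA$ is a model-complete core precisely when every orbit of $\Aut(\cA)$ on tuples of arbitrary arity is primitive positive definable in $\cA$; equivalently, $\Orb(\cA) \subseteq \cclone{\cA}$. Combined with the assumption $\Orb(\cB) \subseteq \Orb(\cA)$, we obtain $\Orb(\cB) \subseteq \cclone{\cA}$. Any relation that already lies in $\cclone{\cA}$ is trivially existential positive definable in $\cclone{\cA}$, via the single atom formed by the relation symbol itself (indeed, a pp-definition is already an existential positive definition, merely using no disjunction). Hence condition (3) is satisfied.

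With all four hypotheses in place, Theorem~\ref{thm:fpt-result-variant} applied to the pair $(\cA,\cB)$ yields an algorithm that places $\alien{}{\cA}{\cB}$ in \FPT parameterized by \numac{}, which is what we need. There is no genuine obstacle here: the notion of a model-complete core is precisely tailored so that orbits behave like pp-definable relations, and the lemma essentially records this convenient bridge to Theorem~\ref{thm:fpt-result-variant} for later use in the equality-language classification of Section~\ref{sec:equality-classification}.
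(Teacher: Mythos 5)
Your proposal is correct and follows essentially the same route as the paper: both reduce the lemma to Theorem~\ref{thm:fpt-result-variant}, with the key observation that the model-complete core property gives $\Orb(\cB) \subseteq \Orb(\cA) \subseteq \cclone{\cA}$, so the orbits of $\cB$ are (trivially) existential positive definable over $\cA$. The paper's write-up additionally unpacks each relation of $\cB$ as a finite union of orbits, but that step is already contained in the proof of the cited theorem, so your version of the verification is complete.
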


\subsection{Classification of Equality Languages}
\label{sec:equality-classification}

We present a complexity classification
of $\alien{}{\cA}{\cB}$
for equality languages
$\cA$, $\cB$.
Essentially, there are two interesting cases:
when $\cA$ is Horn, and when $\cA$ is $0$-valid and not Horn.
In the former case, $\alien{}{\cA}{\cB}$
is in \FPT parameterized by \numac{},
while in the second case it is \paraNP-hard.
It turns out that the ability to
pp-define the arity-$c$ disequality relation,
where $c$ depends only on $\cA$,
using at most $k$ alien constraints,
determines the complexity.
A dichotomy for
$\textsc{Redundant}(\cdot)$,
$\textsc{Impl}(\cdot)$, and
$\textsc{Equiv}(\cdot)$ follows: these problems are either in \Poly or 
\NP-hard under polynomial-time Turing reductions.

Recall that
$\csp{\cA}$ for a finite equality constraint language $\cA$ is in \Poly if
$\cA$ is 0-valid or preserved by a binary injective operation,
and \NP-hard otherwise,
and that the automorphism group for equality languages is the symmetric group $\Sigma$
on $\Nat$, i.e. the set of permutations on $\Nat$.
It is easy to see that an orbit of a $k$-tuple $(a_1,\dots,a_k)$ is
pp-definable in $\{=,\neq\}$. For instance, the orbit of $(0,0,1,2)$
is defined by
$O(x_1,x_2,x_3,x_4) \equiv x_1 = x_2 \wedge x_2 \neq x_3 \wedge x_2 \neq x_4 \wedge x_3 \neq x_4$.
Observe that $\neq$ is invariant under every binary injective operation, so if $\cA$ is Horn, then $\neq \in \langle \cA \rangle$ and every orbit of $n$-tuples under $\Sigma$ is pp-definable in $\cA$. Thus,
$\cA$ is a model-complete core
as pointed out in Section~\ref{sec:infdom-general}.
Lemma~\ref{lem:modelcompletefpt} now implies the following.

\begin{corollary}
  \label{cor:Horn-fpt}
  Let $\cA$ and $\cB$ be equality languages.
  If $\cA$ is Horn,
  then $\alien{}{\cA}{\cB}$ is in \FPT parameterized by \numac{}.
\end{corollary}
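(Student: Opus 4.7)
The plan is to verify each hypothesis of Lemma~\ref{lem:modelcompletefpt} for the pair $(\cA, \cB)$, after which the conclusion is immediate. That lemma requires: (i) $\cA$ and $\cB$ are $\omega$-categorical over a common countable domain, (ii) $\cA$ is a model-complete core, (iii) $\csp{\cA}$ is in $\Poly$, and (iv) $\Orb(\cB) \subseteq \Orb(\cA)$.

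First I would dispatch the easy items. Condition (i) is automatic, since every equality language is by definition a first-order reduct of $(\Nat;=)$ and thus $\omega$-categorical with domain $\Nat$. Condition (iii) follows from the Bodirsky–K\'ara classification recalled in Section~\ref{sec:prelims}: $\cA$ being Horn means it is preserved by a binary injective operation, one of the tractable cases. For (iv), note that the automorphism group of any equality language is the full symmetric group $\Sigma$ on $\Nat$, because every permutation of $\Nat$ preserves $x=y$ and hence every first-order formula over $(\Nat;=)$. Therefore $\Aut(\cA) = \Aut(\cB) = \Sigma$ and the two structures have the same orbits of $n$-tuples, so (iv) holds trivially (in fact with equality).

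The only substantive step is (ii), which I would argue exactly as the paragraph preceding the corollary sketches. Since $\cA$ is Horn it is invariant under some binary injective operation, and $\neq$ is invariant under \emph{every} binary injective operation; combining this with the Galois connection $\cclone{\cA} = \Inv(\Pol(\cA))$ available for $\omega$-categorical $\cA$ gives $\neq \in \cclone{\cA}$. An arbitrary orbit of an $n$-tuple $(a_1,\dots,a_n)$ under $\Sigma$ is exactly the set of tuples with the same equality pattern, so it is defined by the quantifier-free primitive positive formula
$\bigwedge_{a_i = a_j} x_i = x_j \;\wedge\; \bigwedge_{a_i \neq a_j} x_i \neq x_j$
over $\{=,\neq\}$, and hence pp-definable in $\cA$. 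Thus every orbit of $n$-tuples under $\Aut(\cA) = \Sigma$ is pp-definable in $\cA$, which is precisely the definition of a model-complete core.

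With (i)--(iv) verified, Lemma~\ref{lem:modelcompletefpt} yields that $\alien{}{\cA}{\cB}$ is in \FPT parameterized by \numac{}. I do not expect any real obstacle: the argument is an orchestration of ingredients already established in Section~\ref{sec:infdom-general} (the notion of a model-complete core, the Galois connection for $\omega$-categorical structures, Lemma~\ref{lem:modelcompletefpt}) together with the classical Bodirsky–K\'ara classification, and requires no new combinatorial construction.
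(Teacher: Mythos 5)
Your argument is correct and coincides with the paper's own proof, which is exactly the paragraph preceding the corollary: show that a Horn equality language is a model-complete core (via pp-definability of $\neq$ and hence of every orbit in $\{=,\neq\}$), note that $\Orb(\cB) \subseteq \Orb(\cA)$ because the automorphism group of any equality language is the full symmetric group, and invoke Lemma~\ref{lem:modelcompletefpt}. The one caveat---shared with the paper---is that the step ``$\neq$ is invariant under every binary injection, so $\neq \in \cclone{\cA}$ by the Galois connection'' is not a literal application of $\cclone{\cA} = \Inv(\Pol(\cA))$, since that would require $\neq$ to be preserved by \emph{all} polymorphisms of $\cA$ (which fails, e.g., for $\cA = \{=\}$ or any $0$-valid Horn language, where constant operations are polymorphisms); making this inference precise requires the structural description of polymorphism clones of non-trivial Horn equality languages from \cite{bodirsky_chen_pinsker_2010}, or a direct argument branching each alien constraint into orbits and solving the resulting Horn instances over $\cA \cup \{=,\neq\}$.
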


Thus, we need to classify the complexity of
$\alien{k}{\cA}{\cB}$ for every $k$,
where $\cA$ is $0$-valid and not Horn,
and $\cB$ is not $0$-valid.
We will rely on results about the complexity of
singleton expansions of equality constraint languages.
Let $\cA$ be a constraint language over the domain $\Nat$.
By $\cA^+_c$ we denote the expansion of $\cA$ with $c$ singleton
relations, i.e. $\cA^+_c = \cA \cup \{ \{1\}, \dots, \{c\} \}$.
The complexity of $\csp{\cA^+_c}$ for equality constraint
languages $\cA$ and all constants $c$ was classified by Osipov \& Wahlstr\"om~\cite[Section~7]{equalityMinCSParXiv},
building on the detailed study of polymorphisms of
equality constraint languages 
by Bodirsky et al.~\cite{bodirsky_chen_pinsker_2010}.

The connection between $\alien{k}{\cA}{\cB}$ 
and $\csp{\cA^+_c}$ is the following.
In one direction,
we can augment every instance of $\csp{\cA}$
with $c$ fresh variables $z_1,\dots,z_c$ and, 
assuming $k$ is large enough
and $\cB$ is not $0$-valid,
use $\cB$-constraints to ensure that
$z_1,\dots,z_c$ attain distinct
values in every satisfying assignment.
Given that $\cA$ is invariant under 
every permutation of $\Nat$, 
we can now treat $z_1, \dots, z_c$ as constants,
e.g. as $1, \dots, c$, 
and transfer hardness results
from the singleton expansion to our problem.
In the other direction, if the relation
$\NEQQ{c+1} \notin \cclone{\cA \cup \cB}_{\leq k}$,
then every satisfiable instance of $\alien{k}{\cA}{\cB}$
has a solution with range $[c]$,
and $\cA^+_c$ is tractable:
indeed, a satisfiable instance without such
a solution would be a pp-definition of $\NEQQ{c'}$
for some $c' > c$.
These connections are formalized in
Lemmas~\ref{lem:neq-definition}~and~\ref{lem:expansion-to-alien}.
We will leverage the following hardness result.

\begin{lemma}[Follows from Theorem~54~in~\cite{equalityMinCSParXiv}]
  \label{lem:not-horn-expansion-nph}
  Let $\cA$ be a finite equality language.
  If $\cA$ is not Horn, then $\csp{\cA^+_c}$ is \NP-hard for some $c = c(\cA)$.
\end{lemma}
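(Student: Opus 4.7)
The plan is to deduce the lemma directly from~\cite[Theorem~54]{equalityMinCSParXiv}, which provides a complete complexity classification of $\csp{\cA^+_c}$ in terms of the polymorphisms of $\cA$ and the value of $c$. The proof is essentially bookkeeping: show that when $\cA$ is not Horn, the polymorphism clone of $\cA^+_c$ falls outside every tractable case in that classification once $c$ is sufficiently large.

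First, I would invoke the description of polymorphism clones of equality languages from Bodirsky, Chen, and Pinsker~\cite{bodirsky_chen_pinsker_2010}, which underlies the cited classification. The Bodirsky--K\'ara dichotomy already identifies the two sources of tractability for a pure equality CSP $\csp{\cA}$: invariance under a constant operation (\emph{0-valid}), and invariance under a binary injective operation (\emph{Horn}). Expanding $\cA$ by $c \ge 1$ distinct singletons immediately destroys the first source, since no constant operation can fix two distinct elements of $\Nat$. Thus for every $c \ge 1$, 0-validity of $\cA$ contributes nothing to the tractability of $\cA^+_c$. Moreover, the polymorphism clone of $\cA^+_c$ shrinks monotonically in $c$, so the set of tractable cases can only become more restrictive as $c$ grows.

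Second, I would inspect the tractable cases enumerated in~\cite[Theorem~54]{equalityMinCSParXiv} one by one. Each such case other than the ``Horn'' case carries a quantitative bound on $c$ determined by the arities of the finitely many polymorphisms of $\cA$ that witness tractability there; let $c(\cA)$ be any integer strictly larger than all of these bounds. For $c \ge c(\cA)$, no tractable polymorphism type listed in Theorem~54 can apply to $\cA^+_c$ other than through a binary injective polymorphism of $\cA$. Since by hypothesis $\cA$ admits no such polymorphism, the cited theorem classifies $\csp{\cA^+_c}$ as \NP-hard.

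The main obstacle is the case analysis underlying the second step, namely extracting an explicit $c(\cA)$ from the enumeration of tractable types in Theorem~54 and verifying that each non-Horn tractable fragment is defeated once $c$ exceeds a bound depending only on the arities appearing in the polymorphism descriptions. Because the classification is already established and each tractable case comes with a concrete numerical criterion, this is a finite verification rather than any genuinely new argument.
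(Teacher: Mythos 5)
Your proposal matches the paper's treatment: the paper offers no independent proof of this lemma and simply derives it as a direct consequence of the classification in Theorem~54 of the cited work, which is exactly the bookkeeping you describe (checking that every non-Horn tractable case there is ruled out once $c$ is large enough, e.g.\ because for large $c$ tractability forces every relation's restriction to $[c]^{\ar(R)}$ to be empty or complete, which would make $\cA$ Horn). The only slip is your claim that adjoining $c\ge 1$ singletons already destroys the constant-operation source of tractability — for $c=1$ the problem remains trivially in \Poly\ — but this is harmless since the lemma only asserts hardness for some sufficiently large $c(\cA)$.
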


Our main tool for studying singleton expansions are \emph{retractions}.

\begin{definition}
  \label{def:retraction}
  Let $\cA$ be an equality language.
  An operation $f \colon \Nat \to [c]$ is a 
  \emph{retraction of $\cA$ to $[c]$}
  if $f$ is an endomorphism of $\cA$
  where
  $f(i) = i$ for all $i \in [c]$.
  If $\cA$ admits a retraction $f$ to $[c]$, then
  we say that \emph{$\cA$ retracts to $[c]$},
  and $\cA_f$ is \emph{a retract (of $\cA$ to $[c]$)}.
\end{definition}

We obtain a useful characterization of retracts.

\begin{lemma} \label{lem:c-slice-and-retract}
  Let $\cA$ be an equality language
  and $f$ be a retraction from $\cA$ to $[c]$.
  Then $f(R) = R \cap [c]^{\ar(R)}$ for all $R \in \cA$.
\end{lemma}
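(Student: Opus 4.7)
The plan is to establish the equality $f(R) = R \cap [c]^{\ar(R)}$ by proving both inclusions separately. Let $n = \ar(R)$ throughout.

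First I would argue the inclusion $f(R) \subseteq R \cap [c]^n$. Since $f$ is a retraction, it is in particular an endomorphism of $\cA$, so it is a (unary) polymorphism of every relation in $\cA$. By definition of polymorphism, this means that for any tuple $(t_1,\dots,t_n) \in R$ the image $(f(t_1),\dots,f(t_n))$ is again in $R$, which gives $f(R) \subseteq R$. Combined with the fact that the codomain of $f$ is $[c]$, hence $f(R) \subseteq [c]^n$, we obtain $f(R) \subseteq R \cap [c]^n$.

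For the reverse inclusion $R \cap [c]^n \subseteq f(R)$, I would exploit that $f$ fixes $[c]$ pointwise. Pick any $(t_1,\dots,t_n) \in R \cap [c]^n$; then each $t_i \in [c]$, so $f(t_i) = t_i$ for every $i$. Consequently $(t_1,\dots,t_n) = (f(t_1),\dots,f(t_n))$, which is an element of $f(R)$ because the preimage tuple $(t_1,\dots,t_n)$ lies in $R$ by assumption.

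Putting the two inclusions together yields the claimed equality. No step is genuinely difficult: the argument uses only the definitions of endomorphism, retraction, and the image $f(R)$, together with the fact that $f$ acts as the identity on $[c]$. The only thing to be slightly careful about is to distinguish the two roles $f$ plays, namely as a structure-preserving map (giving $f(R) \subseteq R$) and as the identity on $[c]$ (giving $R \cap [c]^n \subseteq f(R)$); both facts are immediate from Definition~\ref{def:retraction}.
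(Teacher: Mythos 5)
Your proof is correct and follows essentially the same two-inclusion argument as the paper: $f(R)\subseteq R\cap[c]^{\ar(R)}$ from $f$ being an endomorphism with range $[c]$, and the reverse inclusion from $f$ fixing $[c]$ pointwise. No issues.
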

\begin{proof}
  First, observe that $f(R) \subseteq R \cap [c]^{\ar(R)}$: indeed, 
  $f$ is an endomorphism, so $f(R) \subseteq R$, and
  $f(R) \subseteq [c]^{\ar(R)}$ because 
  the range of $f$ is $[c]$.
  Moreover, we have 
  $R \cap [c]^{\ar(R)} \subseteq f(R)$
  because $f$ is constant on $[c]$,
  so it preserves every tuple in $[c]^{\ar(R)}$.
\end{proof}

The finite-domain language $\{ R \cap [c]^{\ar(R)} : R \in \cA \}$ 
is called a \emph{$c$-slice of $\cA$} in~\cite[Section 7]{equalityMinCSParXiv}.
Lemma~\ref{lem:c-slice-and-retract} shows that
a $c$-slice of $\cA$ is the retract $\cA_f$ under 
any retraction $f$ from $\cA$ to $[c]$.
Note that the definition of the $c$-slice does not depend on $f$,
so we can talk about \emph{the retract of $\cA$ to $[c]$}.
We will use this fact implicitly
when transferring results from 
Theorem~57~in~\cite{equalityMinCSParXiv}.

\begin{lemma}[Follows from Theorem~57~in~\cite{equalityMinCSParXiv}]
  \label{lem:retracts}
  Let $\cA$ be an equality language 
  that is $0$-valid and not Horn,
  and let $c$ be a positive integer.
  Then exactly one of the following holds:
  \begin{itemize}
    \item $\cA$ does not retract to $[c]$, and $\csp{\cA^+_c}$ is \NP-hard.
    \item $\cA$ retracts to $[c]$, and $\csp{\cA^+_c}$ is \NP-hard for all $c \geq 2$.
    \item $\cA$ retracts to $[c]$, and both $\csp{\Delta^+_c}$ for the retract $\Delta$
    and $\csp{\cA^+_c}$ are in \Poly.
  \end{itemize}
\end{lemma}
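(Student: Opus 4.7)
The plan is to derive the trichotomy as a direct consequence of Theorem~57 in~\cite{equalityMinCSParXiv}, using the retraction perspective developed here to translate between the $c$-slice terminology of that paper and the statement we wish to prove. The three cases of the lemma correspond to: (i) no retraction exists, (ii) a retraction exists but the retract remains intractable, and (iii) a retraction exists and the retract is tractable; under the equivalence below, these map directly onto the tractability/hardness alternatives in Theorem~57.

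First, I would solidify the correspondence between the two viewpoints. Lemma~\ref{lem:c-slice-and-retract} already shows that for any retraction $f$ of $\cA$ to $[c]$, the retract $\Delta = \cA_f$ coincides with the $c$-slice $\{R \cap [c]^{\ar(R)} : R \in \cA\}$, so $\Delta$ is well-defined independently of $f$ and matches the object classified in~\cite[Section~7]{equalityMinCSParXiv}. Next, I would prove that when $\cA$ retracts to $[c]$, the problems $\csp{\cA^+_c}$ and $\csp{\Delta^+_c}$ are polynomial-time equivalent. The forward reduction replaces each $\cA$-constraint $R(x_1,\dots,x_k)$ by the $\Delta$-constraint $(R \cap [c]^k)(x_1,\dots,x_k)$: any solution to the $\Delta^+_c$ instance is also a solution to the original, and conversely any solution $s$ to the $\cA^+_c$ instance can be post-composed with $f$, yielding a solution with range in $[c]$ that satisfies every $\Delta$-constraint. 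The reverse reduction replaces $\Delta$-constraints by the corresponding $\cA$-constraints and argues symmetrically using $f$ together with the fact that singleton constraints $\{i\}$ with $i \in [c]$ are preserved pointwise. This equivalence is essential because Theorem~57 is stated for the finite-domain $c$-slice CSP.

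Finally, I would assemble the trichotomy. If $\cA$ does not retract to $[c]$, Theorem~57 gives \NP-hardness of $\csp{\cA^+_c}$ directly, since the non-Horn assumption rules out the tractable polymorphism scenarios covered there. If a retraction exists, the complexity of $\csp{\cA^+_c}$ coincides with that of $\csp{\Delta^+_c}$ by the equivalence above, and Theorem~57 classifies the latter as either in \Poly (case (c)) or \NP-hard uniformly for all $c \geq 2$ (case (b)). The main obstacle I anticipate is verifying the uniformity claim in case (b): that \NP-hardness, once it arises for some $c \geq 2$, cannot be avoided by choosing a different $c \geq 2$. I expect this to follow from the polymorphism analysis of equality languages in~\cite{bodirsky_chen_pinsker_2010} on which Theorem~57 builds, since the obstruction to tractability on $\Delta$ is inherited from a failure of Horn-like polymorphisms of $\cA$ that persists across all $c \geq 2$, but carefully matching our case (b) with the corresponding branch of Theorem~57 is the delicate bookkeeping step.
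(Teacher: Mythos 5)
Your proposal matches the paper's treatment: the lemma is stated as a direct consequence of Theorem~57 of~\cite{equalityMinCSParXiv}, with Lemma~\ref{lem:c-slice-and-retract} serving precisely as the bridge you describe between retracts and $c$-slices, and the paper leaves the translation (including the polynomial-time equivalence of $\csp{\cA^+_c}$ and $\csp{\Delta^+_c}$ under a retraction) implicit rather than spelling it out. Your explicit reductions via post-composition with the retraction $f$ are the correct way to fill in that implicit step, so the proposal is sound and takes essentially the same route.
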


Let $\NEQQ{r} = \{ (t_1,\dots,t_r) \in \Nat^r : |\{t_1,\dots,t_r\}| = r \}$,
i.e. the relation that contains every tuple of arity $r$
with all entries distinct.

\begin{lemma} \label{lem:neq-definition}
$(\star)$
  Let $\cA$ and $\cB$ be equality languages
  and $c \in \Int_+$.
  If $\NEQQ{c+1} \notin \cclone{\cA \cup \cB}_{\leq k}$, then
  every satisfiable instance of $\alien{k}{\cA}{\cB}$ has
  a solution whose range is in $[c]$.
\end{lemma}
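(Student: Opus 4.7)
The plan is to prove the contrapositive: assume $I = (V, C_1 \cup C_2)$ is a satisfiable instance of $\alien{k}{\cA}{\cB}$ (with $|C_2|\leq k$) that has no solution whose range is contained in $[c]$, and build an explicit pp-definition of $\NEQQ{c+1}$ over $\cA \cup \cB$ that uses at most $k$ atoms from $\cB$, contradicting the hypothesis. The whole construction rests on the fact that equality languages are invariant under every permutation of $\Nat$, so the solution set of $I$ is closed under coordinatewise application of any such permutation.

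First I would pick a solution $f^{\star}$ of $I$ whose range $\{a_1,\ldots,a_r\}$ has minimum size; the assumption forces $r \geq c+1$. I then ``factor the instance through $f^{\star}$'' by introducing free variables $x_1,\ldots,x_{c+1}$ standing in for $a_1,\ldots,a_{c+1}$ and quantified variables $y_{c+2},\ldots,y_r$ standing in for $a_{c+2},\ldots,a_r$ (this block being empty when $r = c+1$). For each $v \in V$ set $w_v := x_{f^\star(v)}$ if $f^\star(v) \in \{a_1,\ldots,a_{c+1}\}$ and $w_v := y_{f^\star(v)}$ otherwise, and let $\phi(x_1,\ldots,x_{c+1}) \equiv \exists y_{c+2}\cdots \exists y_r\, \bigwedge_{R(v_1,\ldots,v_s) \in C_1 \cup C_2} R(w_{v_1},\ldots,w_{v_s}).$ Every atom of $\phi$ is inherited (with renamed arguments) from a constraint of $I$, so $\phi$ uses at most $|C_2| \leq k$ atoms from $\cB$, and the relation it defines therefore lies in $\cclone{\cA \cup \cB}_{\leq k}$.

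Next I would verify that $\phi$ defines precisely $\NEQQ{c+1}$. Containment $\NEQQ{c+1} \subseteq [\phi]$ is the easy direction: given pairwise distinct $t_1,\ldots,t_{c+1}$, extend them with fresh distinct $t_{c+2},\ldots,t_r \in \Nat$ (available because $\Nat$ is infinite) and take any permutation $\pi$ of $\Nat$ sending $a_j\mapsto t_j$ for $j \in [r]$. Then $\pi \circ f^{\star}$ is again a solution by permutation-invariance, and the values it assigns to the fibres $(f^\star)^{-1}(a_j)$ supply the needed witnesses for $x_j$ and $y_j$.

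The main obstacle, and the only place where the minimality of $r$ enters, is the reverse inclusion. If $\phi(t_1,\ldots,t_{c+1})$ held with some $t_i=t_j$ for $i\neq j$, the witnessing $y_{c+2},\ldots,y_r$ together with $t_1,\ldots,t_{c+1}$ would define a solution $g$ of $I$ via $g(v) := t_{f^\star(v)}$ on the ``free'' side and $g(v) := y_{f^\star(v)}$ on the ``existential'' side. Its range sits inside $\{t_1,\ldots,t_{c+1}\} \cup \{y_{c+2},\ldots,y_r\}$, which has at most $c + (r-c-1) = r-1$ elements because the collision $t_i=t_j$ shrinks the first set to at most $c$ entries, contradicting minimality of $r$. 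A subtle technical point to watch is that the identifications must be carried out by \emph{substitution} inside the atoms rather than by adding explicit equality conjuncts, so that the alien-atom count stays at exactly $|C_2|$ and no accounting mishap occurs.
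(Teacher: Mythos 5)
Your proof is correct and follows essentially the same route as the paper's: both argue the contrapositive by taking a minimum-range solution, identifying variables that receive equal values to obtain a pp-definition with at most $k$ alien atoms, and using invariance under all permutations of $\Nat$ for one inclusion and range-minimality for the other. The only cosmetic difference is that you existentially quantify the surplus value-variables directly in the construction, whereas the paper first obtains $\NEQQ{r}$ for the full range size $r\geq c+1$ and then projects down to $\NEQQ{c+1}$.
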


\begin{lemma} \label{lem:expansion-to-alien}
$(\star)$
  Let $\cA$, $\cB$ be two equality constraint languages,
  and let $c \in \Int_+$ be an integer.
  $\csp{\cA^+_c}$ is polynomial-time reducible to
  $\alien{k}{\cA}{\cB}$ whenever 
  $\NEQQ{c} \in \cclone{\cA \cup \cB}_{\leq k}$.
\end{lemma}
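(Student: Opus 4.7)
The plan is to simulate the $c$ singleton constants of $\cA^+_c$ by $c$ fresh ``constant-carrying'' variables $z_1,\dots,z_c$ and force them to realize pairwise distinct values by plugging in the assumed pp-definition of $\NEQQ{c}$, which by hypothesis uses at most $k$ atoms from $\cB$. Since every equality language is invariant under the full symmetric group on $\Nat$, once $z_1,\dots,z_c$ carry distinct values we can relabel them to $1,\dots,c$ by a permutation and recover a solution of the original singleton-expanded instance.

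Concretely, I would first fix a pp-definition
\[\NEQQ{c}(z_1,\dots,z_c) \equiv \exists y_1,\dots,y_n \colon \psi(z_1,\dots,z_c,y_1,\dots,y_n),\]
in which $\psi$ is a conjunction of atoms over $\cA\cup\cB\cup\{=\}$ with at most $k$ atoms drawn from $\cB$. Given an instance $(V,C)$ of $\csp{\cA^+_c}$, the reduction builds an instance $(V',C')$ of $\alien{k}{\cA}{\cB}$ as follows: introduce fresh variables $z_1,\dots,z_c,y_1,\dots,y_n$; retain every $\cA$-constraint of $C$; for each singleton constraint $\{i\}(x)\in C$, identify $x$ with $z_i$; and finally append every atom of $\psi$ as a constraint, discharging any equality atoms by further variable identification. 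The resulting $C'$ then contains at most $k$ atoms from $\cB$, is a valid instance of $\alien{k}{\cA}{\cB}$, and is produced in polynomial time.

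For correctness I would argue both directions. Forward: a solution $\sigma$ of $(V,C)$ extends to a solution of $(V',C')$ by setting $\sigma(z_i)=i$ (consistent with the identifications because $\sigma$ satisfies the singleton constraints) together with any witnessing values for $y_1,\dots,y_n$ that exist because $(1,\dots,c)\in\NEQQ{c}$. Backward: from a solution $\tau$ of $(V',C')$ the inlined pp-definition guarantees $(\tau(z_1),\dots,\tau(z_c))\in\NEQQ{c}$, so these values are pairwise distinct; choosing any permutation $\pi$ of $\Nat$ with $\pi(\tau(z_i))=i$ for each $i\in[c]$, the composition $\pi\circ\tau$ still satisfies every $\cA$- and $\cB$-atom of $C'$ because all relations of equality languages are invariant under arbitrary permutations of $\Nat$. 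In particular, for every original singleton constraint $\{i\}(x)$, the variable $x$ was identified with $z_i$, so $\pi(\tau(x))=i$, and the restriction of $\pi\circ\tau$ to $V$ solves $(V,C)$.

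The main obstacle is conceptual rather than computational: one has to be confident that inlining the pp-definition of $\NEQQ{c}$ as a gadget respects the $k$-atom budget, and that the equality atoms appearing inside $\psi$ are harmless (they are, since they can always be eliminated by identifying variables before the reduction terminates). Everything else reduces to the fact that equality languages are preserved by every permutation of $\Nat$, which is precisely what powers the relabeling $\pi$ used in the backward direction.
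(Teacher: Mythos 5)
Your proposal is correct and follows essentially the same route as the paper: introduce $c$ fresh variables, force them pairwise distinct by inlining the assumed pp-definition of $\NEQQ{c}$ (spending at most $k$ alien atoms), identify each unit-assigned variable with the corresponding fresh variable, and use invariance of equality languages under permutations of $\Nat$ for correctness. The paper's version is terser (it delegates correctness to Theorem~\ref{thm:generalized-ppdefs} and explicitly assumes no contradicting unit assignments, which your construction handles implicitly since contradicting identifications collapse two of the $z_i$'s and make the gadget unsatisfiable, as required).
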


We are ready to present the classification.

\begin{theorem}
\label{thm:equality-classification}
  Let $\cA$ and $\cB$ be equality languages
  such that $\csp{\cA}$ is in \Poly and
  $\csp{\cA \cup \cB}$ is \NP-hard.
  \begin{enumerate}
    \item \label{case:horn}
    If $\cA$ is Horn, $\alien{}{\cA}{\cB}$ is in \FPT parameterized by
    \numac{}.
    \item \label{case:non-horn} 
    If $\cA$ is not Horn,
    $\alien{}{\cA}{\cB}$ is \paraNP-hard 
    parameterized by \numac{}.
    Moreover, there exists an integer $c = c(\cA)$ such that
    $\alien{k}{\cA}{\cB}$ is in \Poly whenever
    $\NEQQ{c} \notin \cclone{\cA \cup \cB}_{\leq k}$, 
    and is \NP-hard otherwise.
  \end{enumerate}
\end{theorem}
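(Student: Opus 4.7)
The plan is to dispose of case~\ref{case:horn} immediately and then set up a single integer $c = c(\cA)$ that governs all of case~\ref{case:non-horn}. Case~\ref{case:horn} is immediate from Corollary~\ref{cor:Horn-fpt}. For case~\ref{case:non-horn}, I first observe that since $\csp{\cA}$ is in \Poly and $\cA$ is not Horn, the Bodirsky--K\'ara dichotomy forces $\cA$ to be $0$-valid; and since $\csp{\cA \cup \cB}$ is \NP-hard while $\cA$ is $0$-valid, $\cB$ cannot be $0$-valid (otherwise the all-constant map satisfies every instance). I then define $c = c(\cA)$ to be the smallest positive integer with $\csp{\cA^+_c}$ \NP-hard; this is well-defined by Lemma~\ref{lem:not-horn-expansion-nph}, and by minimality $\csp{\cA^+_{c-1}} \in \Poly$ (using the convention $\cA^+_0 = \cA$).

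The per-$k$ dichotomy then falls out of the two connecting lemmas. If $\NEQQ{c} \in \cclone{\cA \cup \cB}_{\leq k}$, Lemma~\ref{lem:expansion-to-alien} reduces the \NP-hard problem $\csp{\cA^+_c}$ to $\alien{k}{\cA}{\cB}$. Otherwise, Lemma~\ref{lem:neq-definition} (applied with parameter $c-1$) guarantees that every satisfiable instance admits a solution with range inside $[c-1]$; the algorithm then enumerates all maps from the (at most $k \cdot \max_{R \in \cB} \ar(R)$, hence constantly many for fixed $k$) variables appearing in alien constraints into $[c-1]$, discards those violating any alien constraint, substitutes the surviving values into the $\cA$-part, and solves the resulting $\csp{\cA^+_{c-1}}$ instance in polynomial time.

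For para-\NP-hardness I only need a constant $k_0$ with $\NEQQ{c} \in \cclone{\cA \cup \cB}_{\leq k_0}$. The main obstacle, and the step I expect to require the most care, is to extract a nontrivial disequality from a single use of a relation in $\cB$. Picking any non-$0$-valid $R \in \cB$ of arity $r$, letting $m \geq 2$ be the minimum number of distinct entries across tuples of $R$, fixing an orbit $O \subseteq R$ realizing $m$ with class-assignment map $p:[r]\to[m]$, I claim the pp-formula $S(x_1,\ldots,x_m) \equiv R(x_{p(1)}, \ldots, x_{p(r)})$ defines exactly $\NEQQ{m}$: pairwise-distinct inputs land in $O \subseteq R$ by $\Sigma$-invariance, while any collision among the $x_i$ would place the resulting tuple in an orbit with fewer than $m$ distinct entries, contradicting the choice of $m$. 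From $\NEQQ{m}$ one obtains $\neq$ by existentially quantifying $m-2$ fresh variables (using that $\Nat$ is infinite), and then $\NEQQ{c}$ as a conjunction of $\binom{c}{2}$ disequalities, consuming $k_0 = \binom{c}{2}$ alien atoms in total. Combining with the dichotomy above yields \NP-hardness of $\alien{k_0}{\cA}{\cB}$, hence para-\NP-hardness of $\alien{}{\cA}{\cB}$.
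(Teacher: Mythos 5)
Your proposal is correct and follows essentially the same route as the paper: Corollary~\ref{cor:Horn-fpt} for the Horn case, the minimal $c$ from Lemma~\ref{lem:not-horn-expansion-nph}, Lemma~\ref{lem:expansion-to-alien} for hardness when $\NEQQ{c}$ is definable with $k$ alien atoms, and the same branch-on-alien-scopes algorithm via Lemma~\ref{lem:neq-definition} (the paper routes the final polynomial-time call through the finite retract $\Delta^+_{c-1}$ of Lemma~\ref{lem:retracts}, which is also what licenses upgrading ``$\csp{\cA^+_{c-1}}$ is not \NP-hard'' to ``is in \Poly''). Your explicit orbit argument extracting $\NEQQ{m}$, and hence $\neq$, from a non-$0$-valid relation of $\cB$ fills in a step the paper only asserts, and is correct.
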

\begin{proof}
  $\csp{\cA}$ is in \Poly so
  $\cA$ is Horn or $0$-valid.
  If $\cA$ is Horn, then Corollary~\ref{cor:Horn-fpt} applies,
  proving part~\ref{case:horn} of the theorem.
  Suppose $\cA$ is $0$-valid and not Horn.
  By
  applying Lemma~\ref{lem:not-horn-expansion-nph} to $\cA$,
  we infer that there is a minimum positive integer $c$ 
  such that $\csp{\cA^+_c}$ is \NP-hard.
  Since $\cA$ is $0$-valid, we have $c \geq 2$.
  Using Lemma~\ref{lem:expansion-to-alien},
  we can reduce $\csp{\cA^+_c}$ to
  $\alien{k}{\cA}{\cB}$ in polynomial time whenever 
  $\NEQQ{c} \in \cclone{\cA \cup \cB}_{\cB \leq k}$,
  proving that the latter problem is \NP-hard.
  Observe that $\cB$ is not $0$-valid 
  because $\csp{\cA \cup \cB}$ is \NP-hard,
  so $\neq \; \in \cclone{\cB}$ and
  $\NEQQ{c} \in \cclone{\cA \cup \cB}_{\cB \leq k}$ for
  some finite $k \leq \binom{c}{2}$.
  This show the \paraNP-hardness result in part~\ref{case:non-horn}.
  
  To complete the proof of part~\ref{case:non-horn},
  it suffices to show that we can solve
  $\alien{k}{\cA}{\cB}$ in polynomial time 
  whenever $\NEQQ{c} \notin \cclone{\cA \cup \cB}_{\cB \leq k}$.
  To this end, observe that, by the choice of $c$, 
  if $c' < c$, then $\csp{\cA^+_{c'}}$ is in \Poly.
  Then, by Lemma~\ref{lem:retracts}, 
  $\cA$ retracts to the finite domain $[c']$,
  and the retract $\Delta$ is such that 
  $\csp{\Delta^+_{c'}}$ is in \Poly.
  We will use the algorithm for $\csp{\Delta^+_{c'}}$ 
  in our algorithm for
  $\alien{k}{\cA}{\cB}$ that works for all $k$ such that 
  $\NEQQ{c} \notin \cclone{\cA \cup \cB}_{\leq k}$
  
  Let $I$ be an instance of $\alien{k}{\cA}{\cB}$.
  Since $\NEQQ{c} \notin \cclone{\cA \cup \cB}_{\cB \leq k}$, 
  Lemma~\ref{lem:neq-definition} implies that
  $I$ is satisfiable if and only if
  it admits a satisfying assignment with range $[c-1]$.
  Let $X$ be the set of variables in $I$ that occur
  in the scopes of the alien constraints.
  Note that $|X| \in O(k)$.
  Enumerate all assignments $\alpha : X \to [c-1]$,
  and check if it satisfies
  all $\cB$-constraints in $I$.
  If not, reject it, otherwise
  remove the $\cB$-constraints and add
  unary constraints $x = \alpha(x)$ for all $x \in X$ instead.
  This leads to an instance of $\csp{\Delta^{+}_{c-1}}$,
  which is solvable in polynomial time.
  If we obtain a satisfiable instance for some $\alpha$,
  then accept $I$, and otherwise reject it.
  Correctness follows by 
  Lemma~\ref{lem:neq-definition}
  and the fact that the algorithm
  considers all assignments from $X$ to $[c]$.
  We make $2^{O(k)}$ calls to the algorithm
  for $\csp{\Delta^+_{c-1}}$, where $k$ is a fixed constant, 
  and each call runs in polynomial time.
  This completes the proof.
\end{proof}

Theorem~\ref{thm:boolean-classification} 
implies that $\alien{}{\cA}{\cB}$ is 
\paraNP-hard if and only if
$\alien{k}{\cA}{\cB}$ is \NP-hard for some $k$, 
and it is in \FPT parameterized
by \numac{} otherwise. 
Theorem~\ref{thm:equality-classification} now implies
a dichotomy for $\textsc{Redundant}(\cdot)$,
$\textsc{Impl}(\cdot)$, and $\textsc{Equiv}(\cdot)$
over finite equality languages.

\begin{theorem}
\label{thm:eq-lang-redundant}
$(\star)$
  Let $\cA$ be a finite equality language.
  Then $\textsc{Redundant}(\cA)$,
  $\textsc{Impl}(\cA)$, and
  $\textsc{Equiv}(\cA)$ are either in \Poly or \NP-hard (under polynomial-time Turing reductions).
\end{theorem}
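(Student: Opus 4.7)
My plan is to reduce the classification of all three problems to that of $\textsc{Redundant}(\cA)$, since Lemma~\ref{lemma:equiveasy} makes them polynomial-time Turing equivalent. First I would dispose of the easy case: if $\csp{\cA}$ is not in \Poly, then the Bodirsky--K\'ara dichotomy for equality languages forces $\csp{\cA}$ to be \NP-complete, and the first half of Theorem~\ref{thm:redundant-hybrid} immediately yields \NP-hardness of $\textsc{Redundant}(\cA)$. So I would then assume $\csp{\cA} \in \Poly$, which means $\cA$ is Horn or $0$-valid.

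Under this assumption, the second half of Theorem~\ref{thm:redundant-hybrid} reduces the question to whether $\alien{1}{\cA}{\{\bar{R}\}}$ is in \Poly for every $R \in \cA$. Since $\cA$ is finite, only finitely many such subproblems need be classified; moreover each $\bar{R}$ is invariant under the full symmetric group and is therefore still an equality relation, so Theorem~\ref{thm:equality-classification} applies with $\cB = \{\bar{R}\}$. For each $R$ I would split into three cases: if $\csp{\cA \cup \{\bar{R}\}} \in \Poly$, the subproblem is trivially in \Poly; if $\cA$ is Horn, Corollary~\ref{cor:Horn-fpt} places $\alien{}{\cA}{\{\bar{R}\}}$ in \FPT, which at the fixed parameter value $k=1$ means \Poly; and if $\cA$ is $0$-valid but not Horn, part~\ref{case:non-horn} of Theorem~\ref{thm:equality-classification} supplies a sharp \Poly-versus-\NP-hard dichotomy at $k=1$, driven by whether $\NEQQ{c(\cA)} \in \cclone{\cA \cup \{\bar{R}\}}_{\leq 1}$.

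To finish, I would combine the finitely many outcomes: if every $\alien{1}{\cA}{\{\bar{R}\}}$ is in \Poly, then Theorem~\ref{thm:redundant-hybrid} gives $\textsc{Redundant}(\cA) \in \Poly$; otherwise some subproblem is \NP-hard, and its hardness must be transferred to $\textsc{Redundant}(\cA)$ through the Turing reduction implicit in Theorem~\ref{thm:redundant-hybrid}---intuitively, satisfiability of a single alien constraint $\bar{R}(\mathbf{x})$ over the $\cA$-part of an instance is encoded as the non-redundancy of the $\cA$-constraint $R(\mathbf{x})$ added to that same instance. Lemma~\ref{lemma:equiveasy} then propagates the dichotomy to $\textsc{Impl}(\cA)$ and $\textsc{Equiv}(\cA)$. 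The only step beyond routinely chaining cited theorems is verifying that \NP-hardness flows in the correct direction through Theorem~\ref{thm:redundant-hybrid}, whose characterization is phrased as an iff at the level of polynomial-time tractability; extracting an explicit Turing reduction from a hard $\alien{1}{\cA}{\{\bar{R}\}}$ subproblem to $\textsc{Redundant}(\cA)$ is the (mild) main obstacle.
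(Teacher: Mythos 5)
Your proposal is correct and follows essentially the same route as the paper: Lemma~\ref{lemma:equiveasy} to collapse the three problems, Theorem~\ref{thm:redundant-hybrid} to translate \textsc{Redundant}$(\cA)$ into the problems $\alien{1}{\cA}{\{\bar{R}\}}$, and then Corollary~\ref{cor:Horn-fpt} for the Horn case and part~2 of Theorem~\ref{thm:equality-classification} for the $0$-valid non-Horn case (the paper phrases this with $\bar{\cA}$ collectively rather than per relation, which is immaterial at $k=1$). The "mild obstacle" you flag is already resolved inside the proof of Theorem~\ref{thm:redundant-hybrid}, whose left-to-right direction is exactly the many-one reduction from $\alien{1}{\cA}{\{\bar{R}\}}$ to the complement of \textsc{Redundant}$(\cA)$ that you describe.
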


Algebraically characterizing the exact borderline between tractable and hard 
cases of the problem seems difficult.
In particular, given a $0$-valid non-Horn equality language $\cA$,
answering whether $\alien{1}{\cA}{\bar{\cA}}$ is in \Poly,
i.e. whether $\NEQQ{c} \in \cclone{\cA \cup \bar{R}}_{\leq 1}$ 
for some $R \in \cA$ and large enough $c$, 
requires a deeper understanding of such languages.
However, one can show that the answer to this, and even a more general
question is decidable.

\begin{proposition}
  \label{prop:equality-csp-borderline-decidable}
  ($\star$)
  There is an algorithm that takes 
  two equality constraint languages $\cA$ and $\cB$
  and outputs minimum $k \in \Nat \cup \{\infty\}$ such that
  $\alien{k}{\cA}{\cB}$ is \NP-hard.
\end{proposition}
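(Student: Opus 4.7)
The plan is to reduce to a small number of cases using Theorem~\ref{thm:equality-classification} and then, in the nontrivial case, to run an explicit bounded search for the minimum $k$.

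First, I would use the Bodirsky-K\'ara classification of equality CSPs to decide whether $\csp{\cA}$ is in \Poly (i.e.\ whether $\cA$ is $0$-valid or admits a binary injective polymorphism). If it is not, then $\alien{0}{\cA}{\cB}$ is already \NP-hard, so the algorithm outputs $0$. Applying the same classification to $\cA \cup \cB$, if $\csp{\cA \cup \cB}$ is in \Poly, then no value of $k$ makes the problem \NP-hard and we output $\infty$. If $\cA$ is Horn, Theorem~\ref{thm:equality-classification}\ref{case:horn} gives \FPT and we again output $\infty$.

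In the remaining case, $\cA$ is $0$-valid but not Horn and $\csp{\cA \cup \cB}$ is \NP-hard (so $\cB$ is not $0$-valid). By Theorem~\ref{thm:equality-classification}\ref{case:non-horn}, the target is the minimum $k$ such that $\NEQQ{c} \in \cclone{\cA \cup \cB}_{\leq k}$, where $c = c(\cA)$. The constant $c$ is computable from $\cA$: combining Lemma~\ref{lem:not-horn-expansion-nph}, the explicit polymorphism characterization of equality languages in \cite{bodirsky_chen_pinsker_2010}, Lemma~\ref{lem:retracts}, and the finite-domain CSP dichotomy applied to the retracts, one can iterate $c \geq 2$ and test \NP-hardness of $\csp{\cA^+_c}$ in each step. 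Moreover, the proof of Theorem~\ref{thm:equality-classification} gives the bound $k \leq \binom{c}{2} \cdot p$, where $p$ is the minimum number of $\cB$-atoms needed to pp-define $\neq$ from $\cB$ alone; $p$ is itself computed by the same kind of enumeration as $k$ below. Hence the outer loop $k = 0, 1, 2, \ldots$ terminates at a known bound.

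The core step is a decision procedure for $\NEQQ{c} \in \cclone{\cA \cup \cB}_{\leq k}$ for fixed $c$ and $k$. I would restrict attention to canonical pp-definitions in which every existentially quantified variable is shared by at least two atoms, bounding the number of relevant variable slots by a function of $k$, $c$, and the maximum arity in $\cB$. Enumerating the finitely many combinatorial templates of the $\cB$-part (the choice of $\cB$-relation per atom together with the pattern of variable identifications among the atoms and the free variables $x_1,\dots,x_c$) reduces the question to the following: given a template with $d$ distinguished variables, is there a relation $S \in \cclone{\cA}$ of arity $d$ whose conjunction with the template projects onto $(x_1,\dots,x_c)$ as $\NEQQ{c}$? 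Equality relations of a fixed arity are parametrized by subsets of the finite set of orbits (equivalence partitions) of $d$-tuples under $\Aut(\cA)$, so the candidate $S$ ranges over a finite set; deciding $S \in \cclone{\cA}$ reduces to checking invariance of $S$ under generators of $\Pol(\cA)$, using the Galois connection $\cclone{\cA} = \Inv(\Pol(\cA))$ for $\omega$-categorical structures together with the explicit polymorphism description of equality languages. The hard part will be this last effectivity claim: verifying that the polymorphism clone of an equality language is captured by a computable generating set suitable for membership tests. Once this is in place, the rest of the algorithm is bookkeeping: iterate $k$ up to the computed bound, run the decision procedure, and output the first $k$ that succeeds, or $\infty$ if the bound is exceeded.
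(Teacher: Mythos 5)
Your overall strategy matches the paper's: dispose of the easy cases via the Bodirsky--K\'ara classification and Theorem~\ref{thm:equality-classification}, compute the threshold $c = c(\cA)$ by iterating over singleton expansions, bound $k$ in terms of $c$ and the cost of pp-defining $\neq$ from $\cB$, and then run a bounded search whose core is a decision procedure for $\NEQQ{c} \in \cclone{\cA \cup \cB}_{\leq t}$. Your decomposition of a candidate pp-definition into a bounded ``$\cB$-template'' plus an unknown $\cA$-definable relation $S$ on boundedly many distinguished variables is essentially the paper's splitting into $R_{\cA} \land R_{\cB}$ with $R_{\cB}$ depending on at most $r(\cB)\cdot t$ arguments, and your bound $k \leq \binom{c}{2}\cdot p$ is, if anything, stated more carefully than the paper's.

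The genuine gap is exactly the step you flag as ``the hard part'' and then do not close: deciding, for a given $d$-ary equality relation $S$ (a union of orbits of $d$-tuples, i.e.\ a set of partitions of $[d]$), whether $S \in \cclone{\cA}$. The Galois connection $\cclone{\cA} = \Inv(\Pol(\cA))$ does not by itself make this effective: $\Pol(\cA)$ is an infinite set of operations of unbounded arity, and you would need a computable finite generating set (or some other finite certificate) to test invariance, which you do not exhibit. This is not a routine detail --- it is the entire content of the decidability claim, and without it the outer enumeration proves nothing. The paper resolves it by invoking the known decidability of primitive positive definability for equality languages (the result of Bodirsky, Pinsker, and Tsankov cited as~\cite{bodirsky2013decidability}), which it uses both to enumerate the pp-definable equality relations of a given arity and, together with representative-tuple computations, to make the preliminary Horn/$0$-valid tests effective. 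To complete your argument you should replace the polymorphism-generator plan with a direct appeal to that decidability result (or prove an effective generation statement for polymorphism clones of equality languages, which is a substantially harder task than the rest of the proof).
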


\section{Discussion} \label{sec:discussion}

We have focused on structures with finite signatures in this paper.
This is common in the CSP literature since relational structures with infinite signature cause vexatious representational issues.
It may, though, be interesting to look at structures with
infinite signatures, too. Zhuk~\cite{Zhuk:icm2022} observes that the complexity of the following problem is open: given a system of linear equations mod 2 and a single linear equation mod 24, find a satisfying assignment over the domain $\{0,1\}$. 
The equations have unbounded arity so this problem can be viewed
as a $\alien{1}{\cA}{\cB}$ problem where $\cA,\cB$ have infinite signatures. 
This question is thus not directly answered by Theorem~\ref{thm:boolean-classification}.
Second, let us also remark that when considering
$\alien{}{\cA}{\cB}$, we have assumed that both $\cA$ and $\cB$
are taken from some nice ``superstructure''. For example, in the
equality language case we assume that both structures are first-order reducts of $(\Nat;=)$. One could choose structures
more freely and, for example, let $\cA$ be an equality language
and $\cB$ a finite-domain language.
This calls for modifications of the underlying theory since (for instance) the algorithm
that Theorem~\ref{thm:fpt-result-variant2} is based on breaks down.

For finite domains we obtained a \emph{coarse} parameterized dichotomy for $\alien{}{\cA}{\cB}$ separating FPT from $\paraNP$-hardness. Sharper results providing the exact borderline between P and NP-hardness for the $\paraNP$-hard cases are required for classifying implication, equivalence, and redundancy. Via Theorem~\ref{thm:core-reduction} and Theorem~\ref{thm:findomcsp} the interesting case is when $\csp{\cA}$ is in P, $\cA \cup \cB$ is core but $\cA$ is not core. This question may be of independent algebraic interest and could be useful for other problems where the core property is not as straightforward as in the CSP case. For example, in \emph{surjective} CSP we require the solution to be surjective, and this problem is generally hardest to analyze when the template is not a core~\cite{DBLP:journals/dam/BodirskyKM12}.

Any complexity classification of the first-order reducts of 
a structure includes by necessity
a classification of equality CSPs. Thus, our equality language classification
lay the foundation for studying first-order reducts of more expressive structures.
A natural step is to study \emph{temporal languages}, i.e.
first-order reducts of $(\Rat;<)$. 
Our classification of
equality constraint languages relies on the work 
in~\cite{bodirsky_chen_pinsker_2010} via~\cite{equalityMinCSParXiv},
who studied the clones of polymorphisms of 
equality constraint languages in more detail.
One important result, due to 
Haddad \& Rosenberg~\cite{haddad1994finite},
is that after excluding several easy cases,
every equality constraint language we end up with
is only closed under operations with range $[c]$ for some constant $c$.
Then, pp-defining the relation $\NEQQ{c+1}$ brings us into \paraNP-hard territory.
Similar characterizations of the polymorphisms for 
reducts of other infinite structures, e.g. $(\Rat; <)$,
would imply corresponding \paraNP-hardness results, and this
appear to be a manageable way forward.

\bibliographystyle{plainurl}%
\bibliography{references}

\newpage

\appendix \label{sec:appendix}

\noindent
\Huge
\textbf{APPENDIX}
\normalsize

\section{Proof of Theorem~\ref{thm:redundant-hybrid}}
\label{app:redundant-hybrid}

\begin{proof} 
Let $I=(V,C)$ be an arbitrary instance of CSP$(\cA)$ with domain $A$.

\medskip

\noindent
\underline{CSP$(\cA)$ is not in \Poly.} We show that \textsc{Redundant}$(\cA)$ is not in \Poly.
Choose
a relation $R\in\cA$ of arity $p>0$ that satisfies $\emptyset \subsetneq R \subsetneq A^p$.
Note that $\cA$ must contain at least one such relation $R$ since otherwise we can trivially
determine whether an instance is a yes-instance or not, and this contradicts that CSP$(\cA)$ is not in \Poly. Let $(t_1,\dots,t_p)$ be an arbitrary tuple in $R$.
We construct another instance $I'=(V',C')$ such that a certain constraint $c\in C'$ is redundant in $I'$ if and only if $I$ is not satisfiable.

\begin{enumerate}
    \item Introduce $p$ fresh variables $y_1,\dots,y_p$ and define $V'=V\cup\{y_1,y_2,...,y_p\}$.
    \item Define the constraint $c=R(y_1,y_2,...,y_p)$ and let $C'=C\cup\{c\}$.
\end{enumerate}
These steps describe a polynomial time reduction from the CSP$(\cA)$ instance $I$ to
the \textsc{Redundant}$(\cA)$ instance $(I',c)$. 
We prove that $I$ is a yes-instance if and only if $(I',c)$ is a no-instance.

If $I$ is satisfiable, then there exists a satisfying assignment $f:V\rightarrow A$ that satisfies all constraints in $C$.
We show that $I'$ is satisfiable by extending the assignment $f$ to $f' : V' \rightarrow A$: let $f'(x)=f(x)$ when $x \in V$ and 
$f'(y_i)=t_i$, $i \in [p]$.
Note that $\Sol((V',C'))\neq \Sol((V',C' \setminus \{c\}))$ since $R\subsetneq D^p$ so
$c$ is not a redundant constraint in $I'$.

If $I$ is not satisfiable, then
$I'$ is not satisfiable since $C \subseteq C'$.
Thus, $\Sol((V',C')) = \Sol((V',C' \setminus \{c\}))$
and
$(I',c)$ is a yes-instance of {\textsc{Redundant}}$(\cA)$.

We conclude that this is a polynomial-time Turing reduction and the lemma follows. Note that
\textsc{Redundant}$(\Gamma)$ is \NP-hard (under polynomial-time Turing reductions) whenever CSP$(\Gamma)$ is \NP-hard.

\medskip

\noindent
\underline{CSP$(\cA)$ is in \Poly.} We show that \textsc{Redundant}$(\cA)$ is in \Poly if and only if
for every relation $R \in \cA$,
$\alien{1}{\cA}{\{\bar{R}\}}$ is in \Poly.

\smallskip

\noindent
\textbf{Right-to-left direction.} Assume $\alien{1}{\cA}{\{\bar{R}\}}$ is in \Poly for every $R \in \cA$. 
For an instance $I=((V,C),c)$ of \textsc{Redundant}$(\cA)$, let $c=R(x_1,\dots,x_k)$ and define $\bar{c}=\bar{R}(x_1,\dots,x_k)$.
Observe that $I' = (V, (C \setminus \{c\}) \cup \bar{c})$ an instance of $\alien{1}{\cA}{\bar{R}}$ and check whether it is satisfiable.
We claim that $I$ is a no-instance if and only $I'$ is satisfiable.
Indeed, $I$ is a no-instance if and only if $\Sol(V, C \setminus \{c\}) \neq \Sol(V, C)$.
Clearly, $\Sol(V, C) \subseteq \Sol(V, C \setminus \{c\})$, so $I$ is a no-instance 
if and only if there is an assignment $\alpha$ that satisfies $C \setminus \{c\}$ and does not satisfy $c$.
Note that such an assignment $\alpha$ satisfies $I' = (V, (C \setminus \{c\}) \cup \bar{c})$,
so it exists if and only if $I'$ is satisfiable.

\noindent
\textbf{Left-to-right direction.} Assume that \textsc{Redundant}$(\cA)$ is in \Poly. 
We show $\alien{1}{\cA}{\bar{R}}$ is in \Poly as well. 
Let $I = (V,C)$ be an instance of the former problem, 
where $c = \bar{R}(x_1,\dots,x_k)$ is in $C$, and let $\bar{c} = R(x_1,\dots,x_k)$. 
Observe that $I' = (V, (C \setminus \{c\}) \cup \bar{c}, \bar{c})$ is an instance of
\textsc{Redundant}$(\cA)$, and check whether it is a yes-instance.
We claim that $I$ is satisfiable if and only if $I'$ is a no-instance.
Indeed, $I'$ is no-instance if and only if
$\Sol(V, (C \setminus \{c\}) \cup \bar{c}) \neq \Sol(V, C \setminus \{c\})$.
Clearly, $\Sol(V, (C \setminus \{c\}) \cup \bar{c}) \subseteq \Sol(V, C \setminus \{c\})$,
so $I'$ is a no-instance if and only if
there exists an assignment $\alpha$ that satisfies $C \setminus \{c\}$
and does not satisfy $\bar{c}$.
Note that such an assignment $\alpha$ satisfies both
$(C \setminus \{c\})$ and $c$, and hence satisfies $I = (V, C)$,
so $\alpha$ exists if and only if $I$ is satisfiable.
\end{proof}

\section{Proof of Theorem~\ref{thm:generalized-ppdefs}}

\begin{proof}
    We only sketch the proof since the details are very similar to the classical reduction for CSPs in Theorem~\ref{thm:pp-red}. The structures $\cA$ and $\cB$ have finite signatures so we can (without loss of generality) assume that we have access to the
    following information: (1)
    the pp-definitions in $\cA$ for the relations in $\cA^* \setminus \cA$, and (2)
    for every $R \in \cB^* \setminus \cB$, 
    a pp-definiton of $R$ in $\cA \cup \cB$ with $k_R$ $\cB$-constraints.
    
    Let $I=(V,C,k)$ be an arbitrary instance
    of $\alien{}{\cA^*}{\cB^*}$.
    We begin by replacing each $(\cA^* \setminus \cA)$-constraint by its precomputed pp-definition in $\cA$. This does not increase the parameter.
     We similarly replace every $(\cB^* \setminus \cB)$-constraint by its pp-definition over $\cA \cup \cB$. There are at most $k$ such constraints in $C$,  and each of them
    is replaced by at most  $k_R$ constraints over $\cB$ for a fixed constant $k_R$.
    This reduction is obviously correct and can be computed in polynomial time.
    The bound on the parameter follows since $k_R$ only
    depends on the chosen pp-definition over the fixed and finite language $\cA \cup \cB$.
    \end{proof}

\section{Proof of Theorem~\ref{thm:core-reduction}}

\begin{proof}
    Let $e$ be an endomorphism with minimal range in $\End(\cA \cup \cB)$, let $\cA' = \{e(R) \mid R \in \cA\}$ and $\cB' = \{e(R) \mid R \in \cB\}$, of the same signature as $\cA$ and $\cB$.
    First, let $(V,C,k)$ be an instance of $\alien{}{\cA}{\cB}$. For each constraint $R(\mathbf{x}) \in C$ we simply replace it by $e(R)(\mathbf{x})$. It is then easy to verify, and well-known, that the resulting instance is satisfiable if and only if $(V,C)$ is satisfiable. Furthermore, observe that if (1) $R \in \cA$ then $e(R) \in \cA'$, and (2) if $R \in \cB$ then $e(R) \in \cB'$. Hence, $(V,C)$ has $k$ alien constraints $R_1(\mathbf{x}_1), \ldots, R_k(\mathbf{x}_k)$ then the new instance has $k$ alien constraints $e(R_1)(\mathbf{x}_1), \ldots, e(R_k)(\mathbf{x}_k)$, too. Hence, it is an fpt-reduction. 
    
    The other direction is similar: let $(V,C_1 \cup C_2,k)$ be an instance of $\alien{}{\cA'}{\cB'}$. For each constraint $e(R)(\mathbf{x}) \in C_1$ we replace it by $R(\mathbf{x})$ for $R \in \cA$, and for each constraint $e(R)(\mathbf{x}) \in C_2$ we replace it by $R(\mathbf{x})$ for $R \in \cB$. Clearly, the number of alien constraints remains unchanged, and the reduction is an fpt-reduction which exactly preserves \numac{}.
\end{proof}

\section{Proof of Lemma~\ref{lem:constant-reduction-stronger}}

\begin{proof}
Let $(V,C)$ be an instance of CSP$(\mathcal{A} \cup \mathcal{C})$. Pick $c \in \cC$ and consider the set of constraints $C^{c} = \{c(x) \mid c \in C\}$. Pick an arbitrary $c(v) \in C^c$ and consider the instance $(V',C')$ obtained by (1) identifying $v'$ with $v$ for any $c(v') \in C^c$ throughout the instance and (2) replacing $C^c$ from the set of constraints with the single constraint $c(v)$. If we repeat this for every $c \in \mathcal{C}$ we obtain an instance of $\alien{|\mathcal{C}|}{\cA}{\cC}$ which is satisfiable if and only if $(V,C)$ is satisfiable.
\end{proof}

\section{Proof of Theorem~\ref{thm:findomcsp}}
\label{sec:proof_of_thm:findomcsp}

\begin{proof}
We use the fact that every structure with finite
domain has a CSP that is either polynomial-time
solvable or \NP-hard~\cite{Bulatov:focs2017,Zhuk:jacm2020}.
Assume that CSP$(\cA \cup \cC_D)$ is in \Poly.
First, we claim that every tuple over $D$ is pp-definable over $\cA \cup \cC_D$. Thus, let $n \geq 1$ and pick $t = (d_1, \ldots, d_n) \in D^n$. It follows that $\{t\}(x_1, \ldots, x_n) \equiv \{d_1\}(x_1) \land \ldots \land \{d_n\}(x_n)$ since each $\{d_i\} \in \cC_D$.
Second, pick an $n$-ary relation $R = \{t_1, \ldots, t_m\} \in \cB$. Since each $\{t_i\} \in \cclone{\cA \cup \cC_D}$, $R$ is a finite union of relations in $\cclone{\cA \cup \cC_D}$, and every relation in $\cB$ is existential positive definable over $\cA \cup \cC_D$.
We conclude that Theorem~\ref{thm:fpt-result-variant2}
is applicable and that 
$\alien{}{\cA}{\cB}$ is in \FPT parameterized by \numac{}.

For the second statement, we assume that CSP$(\cA \cup \cC_D)$ is \NP-hard. 
We show that there is a polynomial-time
reduction from CSP$(\cA \cup \cC_D)$ 
to $\alien{p}{\cA}{\cB}$ for some
$p$ that only depends on $\cB$. First, let $D = \{a_1, \ldots, a_d\}$ and consider the relation 
$E = \{(e(a_1), \ldots, e(a_d)) \mid e \in \End(\cA \cup \cB)\}$, i.e., the set of endomorphisms of $\cA$ viewed as a $d$-ary relation. It is known that $E \in \cclone{\cA \cup \cB}$~\cite[proof of Theorem~17]{DBLP:conf/dagstuhl/BartoKW17} since  $\cA \cup \cB$ is a core. Let $I = (V,C)$ be an instance of $\csp{\cA \cup \mathcal{C}_D}$. By Lemma~\ref{lem:constant-reduction-stronger} we can without loss of generality assume
that $I$ is an instance of $\alien{d}{\cA}{\cC_D}$, and we will produce a polynomial-time reduction to $\alien{1}{\cA}{\{E\}}$ which is sufficient to prove the claim under Theorem~\ref{thm:generalized-ppdefs}.

Let $v_1, \ldots, v_d \in V$  such that $c_i(v_i) \in C$, i.e., the variables being enforced constant values via the constraints in $\mathcal{C}_D$. We remove the constraints $c_1(v_1), \ldots, c_d(v_d)$ and replace them with $E(v_1, \ldots, v_d)$. We claim that the resulting instance $(V, C')$ is satisfiable if and only if $(V,C)$ is satisfiable. First, assume that $f \colon V \to D$ is a satisfying assignment to $(V,C)$. We see that $f(v_i) = c_i$ for each $i \in [d]$ and thus that $(f(v_1) \ldots, f(v_d)) \in E$. For the other direction, assume that $g \colon V \to D$ is a satisfying assignment to $(V,C')$ and consider the function defined by $\pi(a_i) = g(v_i)$ for every $i \in [d]$. Clearly, $(\pi(v_1), \ldots, \pi(v_d)) \in E$, and it follows that $\pi \in \Aut(\cA \cup \cB)$.
Since $\Aut(\cA \cup \cB)$ is an automorphism group it follows that $\pi^{-1} \in \Aut(\cA \cup \cB)$, too, and the function $h(x) = \pi^{-1}(g(x))$ then gives us the required satisfying assignment.
\end{proof}

\section{Proof of Lemma~\ref{lem:one-occurrence-pp}}

\begin{proof}
   By assumption, $c_0 \in \cclone{\mathcal{A}}$, and to simplify the notation we assume that $c_0 \in \mathcal{A}$. This can be done without loss of generality since in the pp-definition below we can replace any occurrence of $c_0$ by its pp-definition. Fix a tuple $(a_1, \ldots, a_n) \in R$ which is not constantly 0. This is possible since $R \neq \emptyset$ and since $R$ is not 0-valid. We then use the definition 
$c_1(x) \equiv \exists y \colon c_0(y) \land R(x_1, \ldots, x_n)$
where $x_i = x$ if $a_i = 1$ and $x_i = y$ if $a_i = 0$.
\end{proof}

\section{Proof of Lemma~\ref{lem:neq-reduction}}

\begin{proof}
Let $(V,C)$ denote an instance of CSP$(\cA \cup \{c_0,c_1\})$.
Assume (without loss of generality by Lemma~\ref{lem:one-occurrence-pp}) that the constant relations $c_0$ and $c_1$
appear at most one time, respectively, in $C$ and that they restrict the variables $z_0$ and $z_1$ as follows: $c_0(z_0)$ and $c_1(z_1)$.
Let $(V,C')$ denote the instance of $\alien{1}{\cA} {\{\neq\}}$ 
where $C'=(C \setminus \{c_0(z_0),c_1(z_1)\}) \cup \{z_0 \neq z_1\}$.
It is not difficult to verify that $(V,C')$ is satisfiable if and only if $(V,C)$ is satisfiable since $\cA$ is invariant under complement.
\end{proof}

\section{Proof of Theorem~\ref{thm:fpt-result-variant}}
\begin{proof}
Condition 3. says that every relation in $\Orb(\cB)$ is a finite union of
relations in $\langle \cA \rangle$ (as pointed out in Section~\ref{sec:general-fpt}).
Condition 2.\ together with the well-known characterization of $\omega$-categorical structures by Engeler, Svenonius, and Ryll-Nardzewski~\cite[Theorem 6.3.1]{Hodges}
imply that every relation in $\cB$ is a finite union of
relations in $\langle \cA \rangle$. We can now apply
Theorem~\ref{thm:fpt-result-variant2}.
\end{proof}

\section{Proof of Lemma~\ref{lem:modelcompletefpt}}

\begin{proof}
The structure $\cB$ is a model-complete core so every relation in
$\Orb(\cA)$ is pp-definable in $\cA$.
Pick an arbitrary relation $R \in \cB$. The structure $\cB$ is
$\omega$-categorical so $R$ is a finite union of relations in $\Orb(\cB)$.
We have assumed that $\Orb(\cB) \subseteq \Orb(\cA)$ so $R$
is existential positive definable in $\cA$.
The result follows from Theorem~\ref{thm:fpt-result-variant}.
\end{proof}

\section{Proof of Lemma~\ref{lem:neq-definition}}
\begin{proof}
  We prove the contrapositive: if there is a satisfiable instance of $\alien{k}{\cA}{\cB}$ 
  with every satisfying assignment taking at least $c$ values,
  then $\cA \cup \cB$ admits a pp-definition of $\NEQQ{c}$
  with $k$ constraints from $\cB$.
  We will use the fact that for every $d$,
  \[ \NEQQ{c}(x_1,\dots,x_c) \equiv \exists x_{c+1},\dots,x_{c+d} \colon \; \NEQQ{d}(x_1,\dots,x_{c+d}), \]
  so it is enough to pp-define a relation $\NEQQ{c'}$ with $c' \geq c$
  to prove the lemma.
  
  Consider a satisfiable instance $I$ of $\alien{k}{\cA}{\cB}$
  as a quantifier-free primitive-positive formula 
  $\phi(x_1,\dots,x_n)$.
  Note that $I$ contains at most $k$ constraints from $\cB$.
  Let $\alpha$ be a satisfying assignment to $I$ with minimum range,
  and assume without loss of generality that the range is $[c]$ for some $c \in \Int_+$.
  We claim that
  $I' = \phi(y_{\alpha(x_1)}, \dots, y_{\alpha(x_n)})$
  is a pp-definition of $\NEQQ{c}$.
  First, note that every injective assignment satisfies $I'$.
  Moreover, every satisfying assignment to $I'$
  also satisfy $I$, so it must take 
  at least $r$ values (i.e. be injective)
  by the choice of $\alpha$.
  Finally, note that $I'$ contains at most $k$ constraints from $\cB$,
  hence it is an instance of $\alien{k}{\cA}{\cB}$.
\end{proof}

\section{Proof of Lemma~\ref{lem:expansion-to-alien}}

\begin{proof}
  Let $I$ be an instance of $\csp{\cA^+}$.
  We construct an equivalent instance $I'$
  of $\alien{k}{\cA}{\cB}$ starting with all constraints
  in $I$ except for the applications of singleton relations,
  i.e. unit assignments.
  Assume without loss of generality that
  $I$ does not contain two contradicting unit assignments.
  To simulate $c$ constants, create 
  variables $x_1, \dots, x_c$ and add 
  the pp-definitions of
  $\NEQQ{c}(x_1,\dots,x_c)$ to $I'$.
  This requires $k$ applications of $\cB$-constraints.
  Now, replace every variable $v$ in $I'$
  such that the constraint $v = i$ is in $I$ 
  with the new variable $x_i$.
  Clearly, the reduction requires polynomial time.
  The correctness follows since we are using a pp-definition
  to simulate relation $\NEQQ{c}$,
  and it can be verified using Theorem~\ref{thm:generalized-ppdefs}.
\end{proof}

\section{Proof of Theorem~\ref{thm:eq-lang-redundant}}

\begin{proof}
  The problems under consideration are 
  equivalent under polynomial-time Turing reductions by Lemma~\ref{lemma:equiveasy}.
  By Theorem~\ref{thm:redundant-hybrid}, 
  $\textsc{Redundant}(\cA)$ is in \Poly if and only if
  $\alien{1}{\cA}{\bar{\cA}}$ is in \Poly,
  where $\bar{\cA} = \{\bar{R} : R \in \cA\}$
  is the language of complements of $\cA$-relations.
  Clearly, if $\cA$ is neither Horn nor $0$-valid,
  then even $\alien{0}{\cA}{\bar{\cA}}$ is \NP-hard,
  implying that $\textsc{Redundant}(\cA)$ is \coNP-hard as pointed out
  after Lemma~\ref{lem:red-hard}.
  If $\cA$ is Horn, then then $\alien{}{\cA}{\cB}$ is in \FPT parameterized by
  \numac{} so $\alien{1}{\cA}{\bar{\cA}}$ is in \Poly,
  and hence $\textsc{Redundant}(\cA)$ is in \Poly.
  If $\cA$ is $0$-valid and not Horn,
  then $\bar{\cA}$ is not $0$-valid and $\csp{\cA \cup \bar{\cA}}$ is \NP-hard.
  Now, Case~\ref{case:non-horn} of Theorem~\ref{thm:equality-classification} applies.
\end{proof}

\section{Proof of Proposition~\ref{prop:equality-csp-borderline-decidable}}
\begin{proof}
  We will assume that the relations are represented by their defining formulas.
  This way, we can use the results of~\cite{bodirsky2013decidability} immediately.
  We can also test inclusion of a tuple in a relation
  compute a representative set of tuples,
  i.e. a set such that every tuple in the relation
  is isomorphic to one member of this set.
  
  We first check whether $\cA$ and $\cB$ are $0$-valid and whether they are Horn.
  For the first, check whether the all-$0$ tuple is in the relation.
  For the second, recall from~\cite[Lemma~8]{Bodirsky:Kara:toct2008} 
  that a relation is Horn if and only if it is closed under
  any binary injective operation.
  Choose an arbitrary binary injective function $f$
  and check that, for every pair of tuples in the representative set,
  the result of applying $f$ to them componentwise is also in the relation.
  To see that this is sufficient, consider an equality relation $R$, 
  two arbitrary tuples $a,b \in R$ and their representatives
  $a', b'$, i.e. tuples in the representative set such that
  $a_i = a_j \iff a'_i = a'_j$ and $b_i = b_j \iff b'_i = b'_j$.
  Then $(a_i, b_i) = (a_j, b_j) \iff (a'_i, b'_i) = (a'_j, b'_j)$,
  so $f(a',b') \in R \implies f(a,b) \in R$.
  If $\cA$ is Horn or both $\cA$ and $\cB$ are $0$-valid,
  then $k = \infty$ by Corollary~\ref{cor:Horn-fpt}.
  Otherwise, $k < \infty$. 
  If $\cA$ is neither Horn nor constant,
  then $\csp{\cA}$ is \NP-hard, and $k = 0$.
  
  The case we are left with is when $\cA$ is constant and not Horn,
  while $\cB$ is not constant.
  By Lemma~\ref{lem:retracts}, there exists $c \in \Nat$
  such that $\csp{\cA^+_c}$ is \NP-hard,
  and $\csp{\cA^+_{c'}}$ is in \Poly for all $c' < c$.
  We show that $c$ can be computed.
  Note that $\csp{\cA^+_1}$ is in \Poly  
  because every instance is satisfiable by a constant assignment.
  Now consider $c = 2$.
  By Theorem~54~in~\cite{osipov2023parameterized}
  and Lemma~\ref{lem:c-slice-and-retract},
  $\csp{\cA^+_2}$ is in \Poly if the $2$-slice of $\cA$
  is preserved by an affine operation,
  and \NP-hard otherwise.
  We can compute the $2$-slice and check whether
  it is closed under an affine operation in polynomial time.
  If $\csp{\cA^+_2}$ is \NP-hard, 
  then $k = 1$ because $\NEQQ{2} \in \cclone{\emptyset \cup \cB}_{\leq k}$.
  Otherwise, proceed to $c \geq 3$.
  Again, using Theorem~54~in~\cite{osipov2023parameterized}
  and Lemma~\ref{lem:c-slice-and-retract},
  we have that $\csp{\cA^+_c}$ for $c \geq 3$ is in \Poly
  if the $c$-slice of $\cA$ is trivial (contains only
  empty or complete relations), and \NP-hard otherwise.
  This can also be checked in polynomial time.

  Now that $c$ is determined,
  by Theorem~\ref{thm:equality-classification},
  $k$ is the minimum integer such that
  $\NEQQ{c} \in \cclone{\cA \cup \cB}_{\leq k}$.
  Note that $k \leq \binom{c}{2}$
  since $\NEQQ{2} \in \cclone{\emptyset \cup \cB}_{\leq k}$.
  We can find minimum $k$ by considering every value 
  $1 \leq t \leq \binom{c}{2}$ in increasing order and 
  checking whether 
  $\NEQQ{c} \in \cclone{\cA \cup \cB}_{\leq t}$.
  Thus, it remains to show that pp-definability of
  $\NEQQ{c}$ in $\cA \cup \cB$ with at most
  $t$ constraints from $\cB$ is decidable.
  To see this, we can view a pp-definition as
  a relation $R \in \cclone{\cA \cup \cB}_{\leq t}$ 
  such that the projection of $R$ onto first $c$
  indices is $\NEQQ{c}$.
  Furthermore, 
  $R(x_1,\dots,x_n) \equiv R_{\cA}(x_1,\dots,x_n) \land R_{\cB}(x_1,\dots,x_n)$,
  where $R_{\cA} \in \cclone{\cA}$ and
  $R_{\cB} \in \cclone{\emptyset \cup \cB}_{\leq t}$.
  Note that $R_{\cB}$ can only depend on $\ell \leq r(\cB) \cdot t$ arguments,
  where $r(\cB)$ is the maximum arity of a relation in $\cB$,
  which is constant.
  The relation $R_{\cA}$ projected onto these $\ell$ arguments
  is an equality relation of arity $\ell$.
  We can guess $\ell$, enumerate all equality relations $R'_{\cA}$
  of arity $\ell$ pp-definable in $\cA$ using~\cite{bodirsky2013decidability} 
  and 
  enumerate all relations $R'_{\cB}$ in $\cclone{\cB}$ definable
  using $t$ constraints,
  and check whether $R'_{\cA}(x_1,\dots,x_c) \land R'_{\cB}(x_1,\dots,x_c)$
  projected onto $x_1,\dots,x_c$ is $\NEQQ{c}$.
  This completes the proof.
\end{proof}

\end{document}